\crefname{equation}{Eq.}{Eqs.}
\crefname{section}{Sec.}{Secs.}
\crefname{definition}{Definition}{Definitions}
\crefname{proposition}{Proposition}{Propositions}
\crefname{lemma}{Lemma}{Lemmas}
\crefname{theorem}{Theorem}{Theorems}
\crefname{corollary}{Corollary}{Corollaries}
\crefname{conjecture}{Conjecture}{Conjectures}
\crefname{claim}{}{Claims}
\crefname{example}{Example}{Examples}
\newtheorem{definition}{Definition}
\newtheorem{proposition}{Proposition}
\newtheorem{lemma}{Lemma}
\newtheorem{theorem}{Theorem}
\newtheorem{corollary}{Corollary}
\newtheorem{example}{Example}
\newenvironment{manualtheorem}[1]{%
  \manualtheoreminner
}{\endmanualtheoreminner}
\newcommand{\proj}[1]{|{#1}\rangle \langle {#1}|}
\newcommand{\nc}{\newcommand}
\def\bpf{\begin{proof}}
\def\epf{\end{proof}}
\def\bea{\begin{eqnarray}}
\def\eea{\end{eqnarray}}
\def\beq{\begin{equation}}
\def\eeq{\end{equation}}
\def\bal{\begin{aligned}}
\def\eal{\end{aligned}}
\def\bma{\begin{pmatrix}}
\def\ema{\end{pmatrix}}
\def\bigox{\bigotimes}
\def\dg{\dagger}
\def\ox{\otimes}
\def\a{\alpha}
\def\b{\beta}
\nc{\bbA}{{\mathbb A}}  \nc{\bbB}{{\mathbb B}}  \nc{\bbC}{{\mathbb C}}
\nc{\bbD}{{\mathbb D}}  \nc{\bbE}{{\mathbb E}}  \nc{\bbF}{{\mathbb F}}
\nc{\bbG}{{\mathbb G}}  \nc{\bbH}{{\mathbb H}}  \nc{\bbI}{{\mathbb I}}
\nc{\bbJ}{{\mathbb J}}  \nc{\bbK}{{\mathbb K}}  \nc{\bbL}{{\mathbb L}}
\nc{\bbM}{{\mathbb M}}  \nc{\bbN}{{\mathbb N}}  \nc{\bbO}{{\mathbb O}}
\nc{\bbP}{{\mathbb P}}  \nc{\bbQ}{{\mathbb Q}}  \nc{\bbR}{{\mathbb R}}
\nc{\bbS}{{\mathbb S}}  \nc{\bbT}{{\mathbb T}}  \nc{\bbU}{{\mathbb U}}
\nc{\bbV}{{\mathbb V}}  \nc{\bbW}{{\mathbb W}}  \nc{\bbX}{{\mathbb X}}
\nc{\bbY}{{\mathbb Y}}  \nc{\bbZ}{{\mathbb Z}}  
\nc{\bA}{{\bf A}}  \nc{\bB}{{\bf B}}  \nc{\bC}{{\bf C}}
\nc{\bD}{{\bf D}}  \nc{\bE}{{\bf E}}  \nc{\bF}{{\bf F}}
\nc{\bG}{{\bf G}}  \nc{\bH}{{\bf H}}  \nc{\bI}{{\bf I}}
\nc{\bJ}{{\bf J}}  \nc{\bK}{{\bf K}}  \nc{\bL}{{\bf L}}
\nc{\bM}{{\bf M}}  \nc{\bN}{{\bf N}}  \nc{\bO}{{\bf O}}
\nc{\bP}{{\bf P}}  \nc{\bQ}{{\bf Q}}  \nc{\bR}{{\bf R}}
\nc{\bS}{{\bf S}}  \nc{\bT}{{\bf T}}  \nc{\bU}{{\bf U}}
\nc{\bV}{{\bf V}}  \nc{\bW}{{\bf W}}  \nc{\bX}{{\bf X}}
\nc{\bY}{{\bf Y}}  \nc{\bZ}{{\bf Z}}  
\nc{\cA}{{\cal A}}  \nc{\cB}{{\cal B}}  \nc{\cC}{{\cal C}}
\nc{\cD}{{\cal D}}  \nc{\cE}{{\cal E}}  \nc{\cF}{{\cal F}}
\nc{\cG}{{\cal G}}  \nc{\cH}{{\cal H}}  \nc{\cI}{{\cal I}}
\nc{\cJ}{{\cal J}}  \nc{\cK}{{\cal K}}  \nc{\cL}{{\cal L}}
\nc{\cM}{{\cal M}}  \nc{\cN}{{\cal N}}  \nc{\cO}{{\cal O}}
\nc{\cP}{{\cal P}}  \nc{\cQ}{{\cal Q}}  \nc{\cR}{{\cal R}}
\nc{\cS}{{\cal S}}  \nc{\cT}{{\cal T}}  \nc{\cU}{{\cal U}}
\nc{\cV}{{\cal V}}  \nc{\cW}{{\cal W}}  \nc{\cX}{{\cal X}}
\nc{\cY}{{\cal Y}}  \nc{\cZ}{{\cal Z}}  
\def\ox{\otimes}
\def\bigox{\bigotimes}
\def\dg{\dagger}
\begin{document}


\title{The additivity of states uniquely determined by marginals} 

\author{Yi Shen}
\email[]{yishen@jiangnan.edu.cn}
\affiliation{School of Science, Jiangnan University, Wuxi Jiangsu 214122, China}

\author{Lin Chen}
\email[Corresponding author: ]{linchen@buaa.edu.cn}
\affiliation{School of Mathematical Sciences, Beihang University, Beijing 100191, China}
\affiliation{International Research Institute for Multidisciplinary Science, Beihang University, Beijing 100191, China}

\date{\today} 

\begin{abstract}
The pure states that can be uniquely determined among all (UDA) states by their marginals are essential to efficient quantum state tomography. We generalize the UDA states from the context of pure states to that of arbitrary (no matter pure or mixed) states, motivated by the efficient state tomography of low-rank states. 
We call the \emph{additivity} of $k$-UDA states for three different composite ways of tensor product, if the composite state of two $k$-UDA states is still uniquely determined by the $k$-partite marginals for the corresponding type of tensor product. 
We show that the additivity holds if one of the two initial states is pure, and present the conditions under which the additivity holds for two mixed UDA states. One of the three composite ways of tensor product is also adopted to construct genuinely multipartite entangled (GME) states. Therefore, it is effective to construct multipartite $k$-UDA state with genuine entanglement by uniting the additivity of $k$-UDA states and the construction of GME states.
\end{abstract}


\maketitle


\section{Introduction}
\label{sec:intro}

In quantum mechanics the correlation between the whole and its parts reflects a significant difference from its classical counterpart \cite{Wyderka_phd}. A remarkable example is the Bell state, each of whose single-party reduced state is maximally mixed. It means that no useful information of each party can be obtained by local measurements. For a global state prepared, one can obsesrve its parts by implementing measurements. More importantly, can we infer or even reconstruct the global state with given measurement results? This task is known as the quantum state tomography (QST) \cite{QST2012}. Performing QST 
based on the reduced states is a common and efficient method to characterize the global state \cite{QSTviaRDM2017,pxdxduda,Xin2019}. It closely connects to another essentail topic in quantum information, namely the marginal problem, which stems from quantum chemistry \cite{FDM1963}. The marginal problem asks whether there is a global state compatible with a given set of multipartite marginal reductions \cite{QMA_Dardo_2022}. If the answer is positive, it is interesting to further study whether such a global state is uniquely determined. This uniqueness issue plays a core role in the efficient QST \cite{Cramer2010,QSTviaRDM2017}, because the general case of QST requires a large number of observables as the dimension of the quantum system increases \cite{QST_CMP}.

If the set of marginal reductions is generated from a given pure state, the uniquess issue is specified as whether there is another global state sharing all the same reductions as the given state \cite{3qbuda}. According to the scope of the discussion, this problem is divided into two cases, namely uniquely determined among pure states (UDP) \cite{2of3udp2004} and uniquely determined among all states (UDA) \cite{udaubound}. The former means there is no other pure state satisfying the desired requirement, and the latter means there is no other state satisfying the desired requirement. Such uniquely determined states are of practical interest for the following reasons. First, they make tomography meaningful and efficient as we mentioned above. Second, they are closely related to the unique ground (UG) state of a Hamiltonian that may be obtained by engineering this Hamiltonian and then cooling down the system \cite{UGS_Huber,udp_ugs}. Third, the relation of the three classes of UDP, UDA and UG states respectively implies a hierarchy of topological order \cite[Fig. 4]{tpo2018}. Topological states are widely used to construct topological stabilizer codes \cite{Liao2021,Liao2022}, for example the toric codes and the surface codes. These codes are typically defined in the systems of large number of parties. Hence, it is necessary to study the UDP and UDA states of large number of parties. This is one of our motivations for proposing the concept of \emph{additivity} which allows to generate UDA states of larger number of parties.


Bipartite states generally cannot be fixed by their single-party reductions. Hence, the first non-trivial case of UDA states should be tripartite states. It was first shown that almost every pure three-qubit state is completely determined by its bipartite reduced states \cite{3qbuda}. Subsequently, it was shown that almost every pure multipartite state (whose local dimensions are all equal) is UDA by its reductions of a fraction of the parties, and the fraction was specified as less than about two-thirds of the parties \cite{udaubound}. This fraction was further improved to be just over half the parties \cite{npudalbound}. With the prior knowledge that the given state is pure, a stronger result comes out that almost every tripartite pure state is uniquely determined by only two of three bipartite marginals \cite{2of3udp2004}. Similar conclusions were extended to  the UDA states in some tripartite systems \cite{udaobservable,pxdxduda}. These results imply that only a part of reductions is sufficient to ensure the uniqueness. Thus, it is interesting to find which reductions are enough to ensure the uniqueness \cite{4p2udp}.
Note here that the term ``almost every state'' is adopted to characterize the set of UDP or UDA states, which means excluding a measure zero set from the overall set being considered. In some cases such measure zero set is explicit \cite{GGHZ2008,4p2udp,GGHZ2009}. 
For example, only the $n$-qubit generalized GHZ states and their local unitary equivalents cannot be fixed by the $(n-1)$-qubit reductions \cite{GGHZ2009}.


Both UDP and UDA states are typically specified as pure states. Nevertheless, pure states are extremely unstable, which leads to mixed states being more common in the laboratory. It motivates us to consider the uniqueness issue in terms of arbitrary (pure or mixed) states. We define the mixed UDA states in Definition \ref{def:guda} by generalizing the typical definition. One can verify whether an arbitrary state is UDA as the flow chart depicted in Fig. \ref{fig:def-uda}. The QST of low-rank states has been studied recently \cite{QSTviaCS,QST_rank2016,Yuen2023}. Analogous to pure UDA states, mixed UDA states can also make the QST of low-rank states more efficient. 
In this paper we study the generation of UDA states in the systems of large number of parties and high local dimensions. Specifically, we consider three different composite ways of tensor product defined by Definition \ref{def:prods}. Each type of products corresponds to a way of system composition. We illustrate such system compositions in Fig. \ref{fig:tensors}. The first two products are known as the tensor product and the Kronecker product respectively, which can expand the number of parties and enlarge local dimensions respectively from Figs. \ref{fig:tensor1} and \ref{fig:tensor2}. The third one combines the first two types of products by applying the Kronecker product on a part of subsystems, which can expand the number of parties and enlarge local dimensions simultaneously from Fig. \ref{fig:tensor3}. To characterize mixed UDA states, we extend two essential properties of pure UDA states to the case of mixed ones in Lemmas \ref{le:luequiv} - \ref{le:ktok+1}.
Then we call the \emph{additivity} of UDA states under each type of products if the corresponding product of two UDA states is still UDA by the marginals of the same number of parties. 
By virtue of the specific expressions formulated in Lemma \ref{le:preduced}, we show that the UDA states admit the additivity if one of the initial two states is pure, in Theorem \ref{thm:add-p+m}. Specially, the additivity of UDA states holds in the context of pure states. If the initial two states are both mixed UDA states, we propose the conditions for the additivity in Lemma \ref{le:prodadd}, and conjecture there is generally no additivity in this setting by Lemma \ref{le:ptzero}. 
Furthermore, uniting the construction of GME states proposed in Ref. \cite{geshen2020} and the additivity of UDA states, we  derive an operational approach to construct multipartite UDA states with genuine entanglement in Proposition \ref{prop:geuda}. We illustrate the construction process by Fig. \ref{fig:geuda}. By repeating this process, we construct a class of GME states which are uniquely determined by two-particle correlations only, in Example \ref{ex:geuda-1}.


The remainder of this paper is organized as follows. In Sec. \ref{sec:pre} we clarify some notations and present necessary definitions. In Sec. \ref{sec:prop} we extend two essential properties of pure $k$-UDA states to mixed $k$-UDA states. In Sec. \ref{sec:add} we propose the concept of additivity of $k$-UDA states, and show that $k$-UDA states admit the additivity in the case when one of the two initial states is pure. We further discuss the additivity of two mixed $k$-UDA states in this section.
In Sec. \ref{sec:geuda}, we provide an effective method to construct genuinely entangled UDA states.
Finally, the concluding remarks are given in Sec. \ref{sec:con}.

\section{Preliminaries}
\label{sec:pre}

In this section we shall clarify some notations for convenience, and formulate necessary definitions on mixed UDA states and different types of tensor products.

First, we introduce some notations for clear expression. For any positive integer $m$, denote by $[m]$ the set as $\{1,2,\cdots,m\}$. Let $\cS$ be a subset of $[m]$. Then denote by $\cS^c$ the complement of $\cS$ in $[m]$, i.e. $[m]\backslash \cS$.
Suppose that $A_1,\cdots,A_m$ are $m$ systems associated with the Hilbert spaces $\cH_{A_1},\cdots,\cH_{A_m}$ respectively. For any subset $\cS\subseteq [m]$, we denote the composite system $\bigox_{i\in\cS}A_i$ as $A_{\cS}$ associated with the Hilbert space $\bigox_{i\in\cS}\cH_{A_i}$. Next, suppose $B_1,\cdots,B_n$ are another $n$ systems associated with the Hilbert spaces $\cH_{B_1},\cdots,\cH_{B_n}$ respectively. Let $l=\min\{m,n\}$ and $\cS$ be a subset of $[l]$.
Analogously, we denote the composite system $\bigox_{i\in\cS}(A_i\ox B_i)$ as $(AB)_{\cS}$ associated with the Hilbert space $\bigox_{i\in\cS}(\cH_{A_i}\ox \cH_{B_i})$.
For more simplicity, we denote the composite system $\bigox_{i\in[\ell]}(A_i\ox B_i)$ as $(C_1,\cdots,C_\ell)$, where $\ell=\max\{m,n\}$, $C_j=(A_jB_j)$ for each $1\leq j\leq \min\{m,n\}$, and $C_j$ is $A_j$ or $B_j$ for each $\min\{m,n\}<j\leq \ell$.


Second, we define the states that can be uniquely determined by their $k$-partite marginal reductions, in terms of arbitrary states rather than pure states only. Since it is generalized from the definitions for pure UDP states and pure UDA states, we also present the typical definitions on pure states as follows.

\begin{definition}
\label{def:udp-uda}
(i) For pure state $\ket{\psi}$, if there is no pure state $\ket{\phi}(\neq\ket{\psi})$ having all the same $k$-partite marginals as $\ket{\psi}$, then $\ket{\psi}$ is called $k$-uniquely determined among pure ($k$-UDP) states. 
    
(ii) For pure state $\rho\equiv\proj{\psi}$, if there is no (pure or mixed) state $\sigma(\neq \rho)$ having all the same $k$-partite marginals as $\rho$, then $\rho$ is called $k$-uniquely determined among all ($k$-UDA) states.
\end{definition}

By definition it is direct to conclude that pure state $\ket{\psi}$ must be $k$-UDP if it is $k$-UDA. Nevertheless, the converse is not obvious. In Ref. \cite{QSTviaRDM2017} the authors firstly constructed a four-qubit pure state which is $2$-UDP but not $2$-UDA. This example reveals that the set of $k$-UDA states is strictly included in the set of $k$-UDP states, see Fig. \ref{fig:uda-udp}. 
Due to the existence of $k$-UDP but not $k$-UDA states, the UDA property shows more essential uniqueness which could play more valuable role in quantum information processing tasks, for example the QST without prior knowledge.   

\begin{figure}[htbp]
\centering
\includegraphics[width=0.48\textwidth]{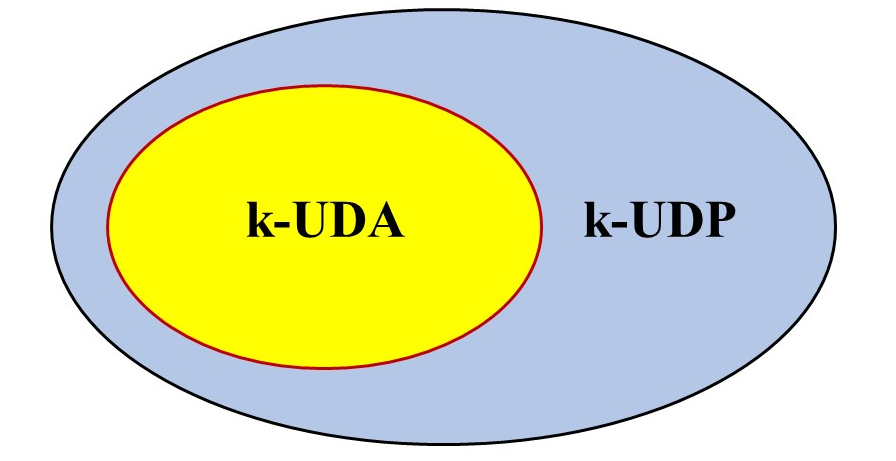}
\caption{The set of pure $k$-UDA states is strictly included in that of pure $k$-UDP states. An example that is $k$-UDP but not $k$-UDA has been proposed in Ref. \cite{QSTviaRDM2017}.}
\label{fig:uda-udp}
\end{figure}

According to Definition \ref{def:udp-uda}, the $k$-UDP states require themselves to be pure states, while the $k$-UDA states can be extended to the case of mixed states by supposing the initial state $\rho$ to be a mixed one.
Then we present the general definition for $k$-UDA states as below. 

\begin{definition}
\label{def:guda}
For an arbitrary state $\rho$ (no matter pure or mixed), if there is no other state $\sigma$ having all the same $k$-partite marginals as $\rho$, then $\rho$ is called $k$-uniquely determined among all ($k$-UDA) states.
\end{definition}

Note that, unless stated otherwise, the $k$-UDA states in this paper is arbitrary (no matter pure or mixed).
By Definition \ref{def:guda} one can verify whether a state is $k$-UDA as the flow chart depicted in Fig. \ref{fig:def-uda}.

\begin{figure}[htbp]
\centering
\includegraphics[width=0.48\textwidth]{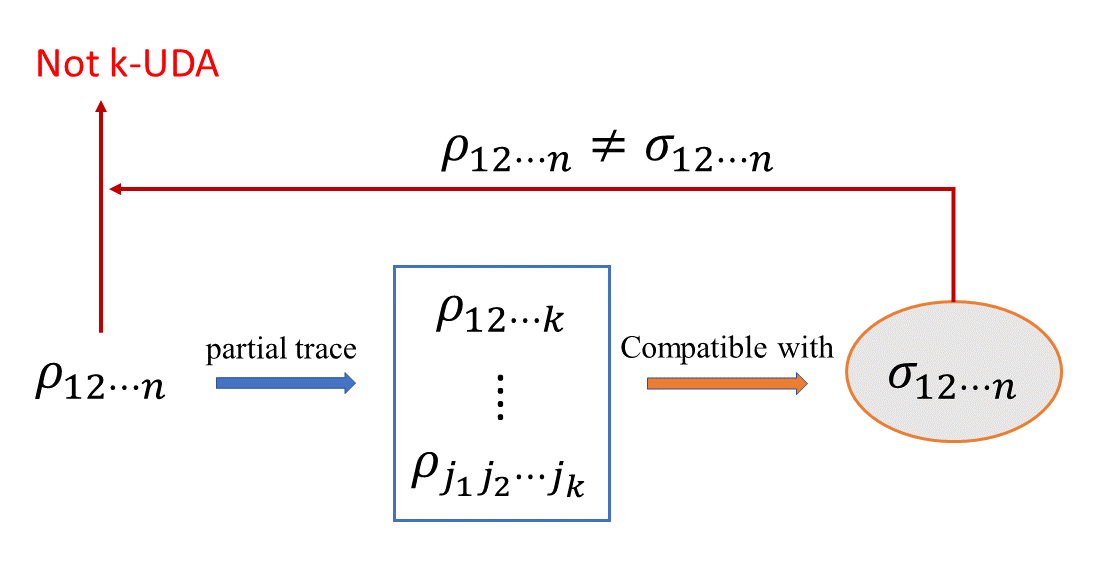}
\caption{The oval represents the compatible set of those $k$-partite marginals from $\rho_{1,\cdots,n}$. $\rho_{1,\cdots,n}$ is not $k$-UDA, if there exists a state $\sigma_{1,\cdots,n}(\neq \rho_{1,\cdots,n})$ in the oval.}
\label{fig:def-uda}
\end{figure}

It follows from Ref. \cite{npudalbound} that the marginals of fewer than half of the parties are not sufficient for the uniqueness among all states. 
It implies that some generic states of fewer parties may transition from a $k$-UDA state to a not $k$-UDA one as the number of parties increases, for a fixed integer $k$. 
Thus, it is necessary to construct $k$-UDA states in a system of more parties and higher local dimensions. In view of this, we introduce several constructions of multipartite states based on different composite ways of tensor product. They were originally proposed to construct GME states, and have been shown to be effective in constructing GME states \cite{geshen2020}. These constructions are operational because the tensor product of two states can be physically realized. We formulate the definitions of three composite ways of tensor product as follows.

\begin{definition}
    \label{def:prods}
Suppose that $\cH_{A_1}\ox \cdots\ox \cH_{A_m}$ and $\cH_{B_1}\ox \cdots\ox \cH_{B_n}$ are two multipartite Hilbert spaces, and assume $m\leq n$ without loss of generality. Let $\rho$ be an $m$-partite state supported on the former Hilbert space and $\sigma$ be an $n$-partite state supported on the latter one. According to different composite ways, there are generally three types of tensor products of $\rho$ and $\sigma$.

(i) The first composite system is defined as
\begin{equation}
    \label{eq:tproddef}
    \cH_{A_1}\ox\cdots\ox\cH_{A_m}\ox\cH_{B_1}\ox\cdots\ox\cH_{B_n}
\end{equation}
which is physically regarded as an $(m+n)$-partite system. The composite state denoted by $\rho\ox\sigma$ supported on the Hilbert space in Eq. \eqref{eq:tproddef} is typically referred to as the tensor product of $\rho$ and $\sigma$. 

(ii) The second composite system is defined as 
\begin{equation}
    \label{eq:kproddef} 
    \bal
    &~~\quad\bigox_{j=1}^m \big(\cH_{A_j}\ox\cH_{B_j}\big) \bigox_{j=m+1}^n \cH_{B_j} \\
    &:= \cH_{(AB)_1}\ox\cdots\ox \cH_{(AB)_m}\ox\cH_{B_{m+1}}\ox\cdots\ox\cH_{B_n}
    \eal
\end{equation}
which is physically regarded as an $n$-partite Hilbert space. The composite state denoted by $\rho\ox_K\sigma$ supported on the Hilbert space in Eq. \eqref{eq:kproddef} is referred to as the Kronecker product of $\rho$ and $\sigma$. 

(iii) The third composite system is defined as $\forall~l<m$,
\begin{equation}
    \label{eq:defkcprof}
    \bal
    &~~\quad\bigox_{j=1}^l \big(\cH_{A_j}\ox\cH_{B_j}\big)\bigox_{j=l+1}^m \cH_{A_j} \bigox_{j=l+1}^n \cH_{B_j} \\
    &:= \cH_{(AB)_1}\ox\cdots\ox \cH_{(AB)_l}\ox\\
    &~~\quad\cH_{A_{l+1}}\ox\cdots\ox\cH_{A_m}\ox\cH_{B_{l+1}}\ox\cdots\ox\cH_{B_n}
    \eal
\end{equation}
which is physically regarded as an $(m+n-l)$-partite Hilbert space. Assume the systems $(AB)_1,\cdots,(AB)_l$ as $C_1,\cdots,C_l$ for simplicity. Then we call the composite state denoted by $\rho\ox_{K_c}\sigma$ supported on the Hilbert space in Eq. \eqref{eq:defkcprof} as the $K_c$-product of $\rho$ and $\sigma$. 
\end{definition}

To better understand the three different composite ways of tensor product, we illustrate Definition \ref{def:prods} by Fig. \ref{fig:tensors}. Each subfigure in Fig. \ref{fig:tensors} corresponds to one composite way of tensor product defined by Definition \ref{def:prods}. Particularly, by Fig. \ref{fig:tensor3} we obsesrve that the proposed $K_c$-product can expand the number of parties and enlarge local dimensions simultaneously.

\begin{figure}[htbp]
    \centering
    \subfloat[The typical tensor product corresponding to Def. \ref{def:prods} (i). It generates an $(m+n)$-partite system.]
        {
        \label{fig:tensor1}
        \includegraphics[width=0.45\textwidth]{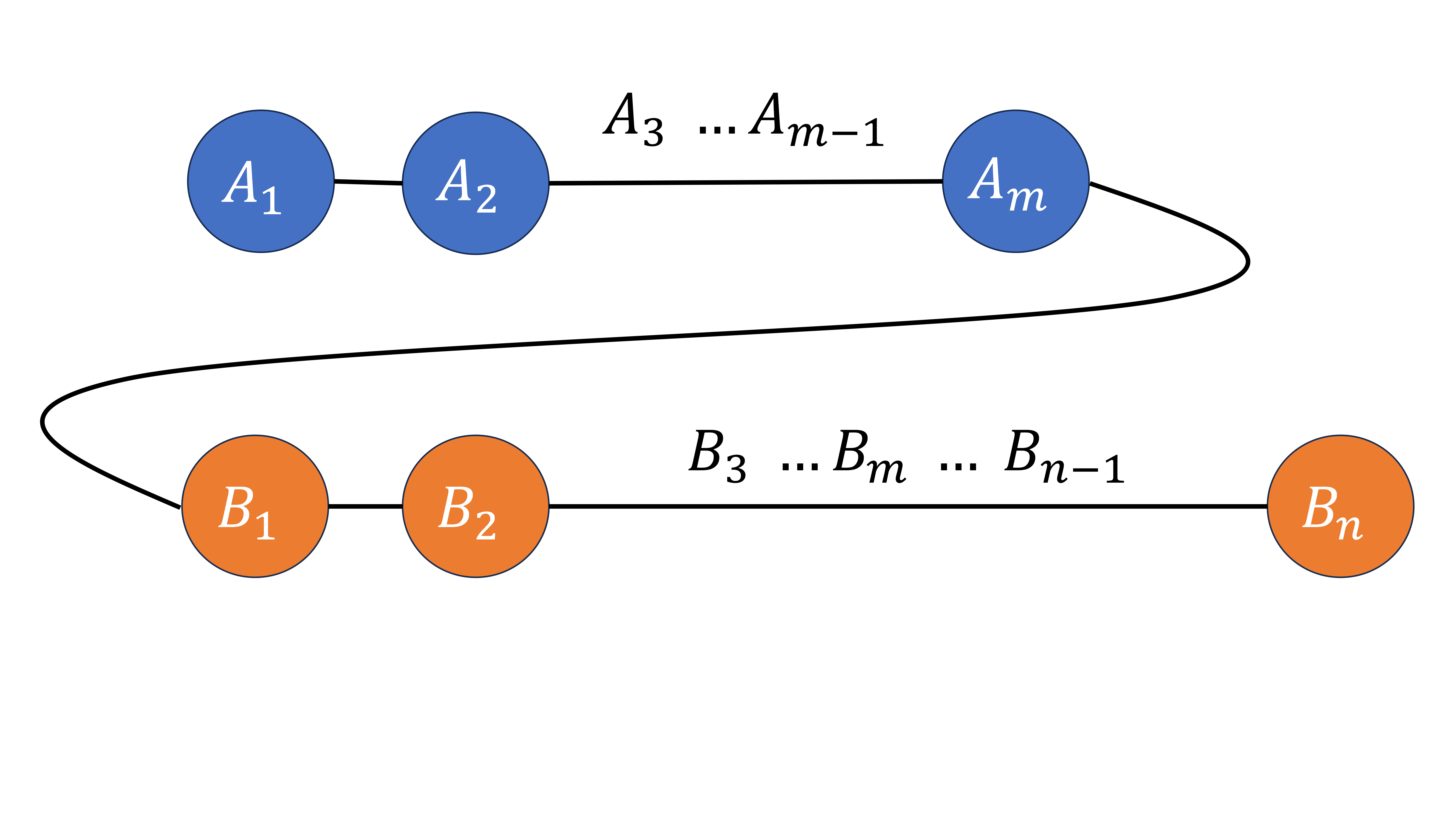}
        }
\\
\subfloat[The Kronecker product corresponding to Def. \ref{def:prods} (ii). It increases the local dimensions of the first $m$ subsystems.]
        {
        \label{fig:tensor2}
        \includegraphics[width=0.45\textwidth]{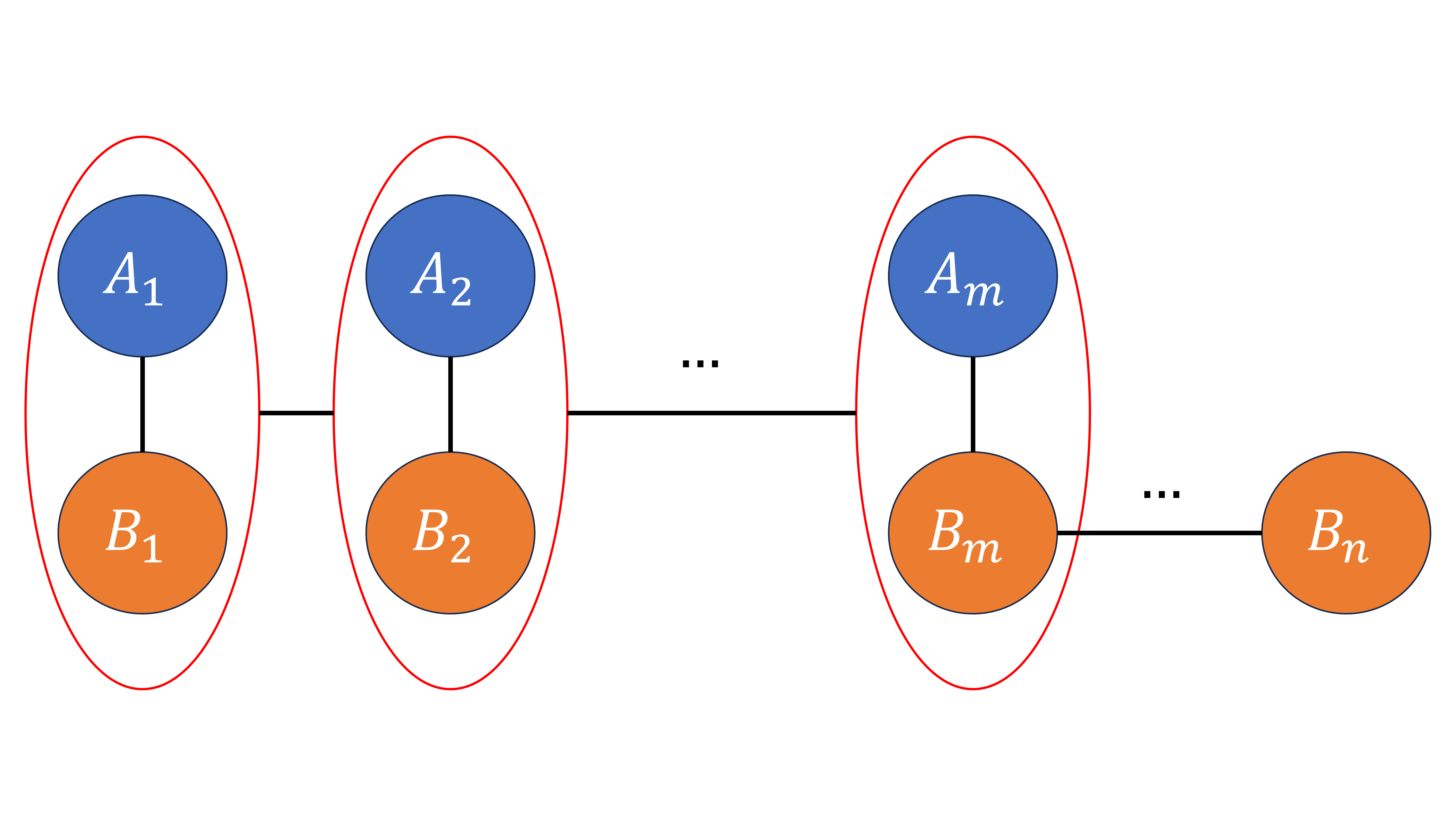}
        }
\\
\subfloat[The proposed $K_c$-product corresponding to Def. \ref{def:prods} (iii). It generates an $(m+n-l)$-partite system and increases a part of local dimensions.]
        {
        \label{fig:tensor3}
        \includegraphics[width=0.45\textwidth]{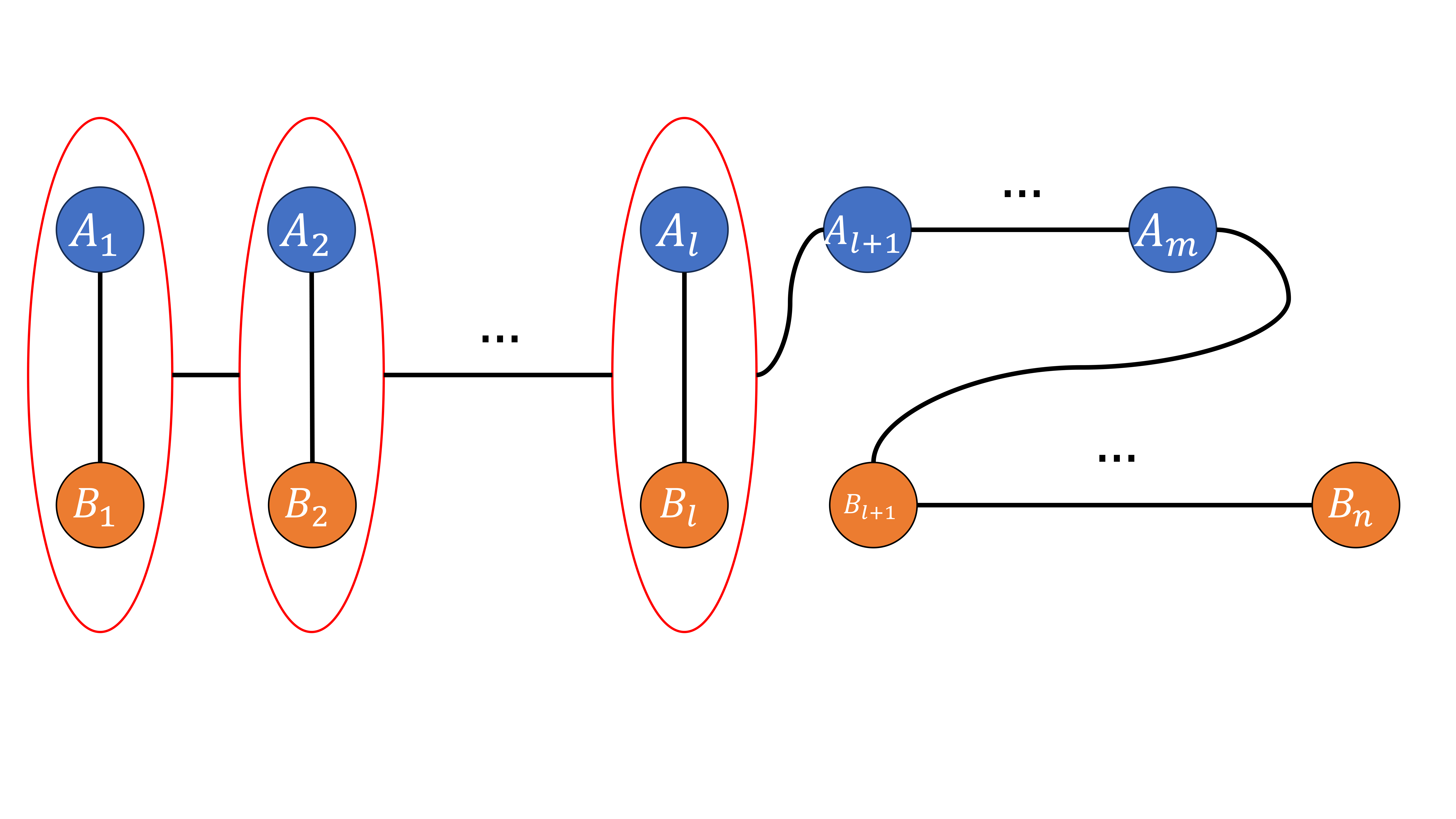}
        }
    \caption{Different composite ways of tensor product by Definition \ref{def:prods}.}
    \label{fig:tensors}
\end{figure}

\section{Properties of UDA states}
\label{sec:prop}

In this section we extend two essential properties of pure $k$-UDA states to mixed $k$-UDA states, i.e., the properties hold for arbitrary $k$-UDA states. First, it is known that the uniqueness among all states is maintained under local unitary (LU) equivalence for pure $k$-UDA states. In Lemma \ref{le:luequiv} we show that this uniqueness is also maintained under LU equivalence for mixed $k$-UDA states. Second, we derive the inclusion relation between the set of $k$-UDA states and that of $(k+1)$-UDA states in Lemma \ref{le:ktok+1}, in terms of arbitrary states.

\begin{lemma}
    \label{le:luequiv}
    If an $n$-partite (pure or mixed) state $\rho$ is $k$-UDA, then any state LU equivalent to $\rho$ is also $k$-UDA.
\end{lemma}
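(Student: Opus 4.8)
The plan is to argue by contradiction, transporting any hypothetical ambiguity for the LU-rotated state back to $\rho$ itself. Write the LU equivalence as $\rho' = U\rho U^\dagger$ with $U = U_1\otimes\cdots\otimes U_n$ a product of local unitaries, and for a subset $\cS\subseteq[n]$ abbreviate $U_\cS := \bigotimes_{i\in\cS}U_i$. The only computational input needed is that taking a $k$-partite marginal commutes with local unitaries in the obvious way: for every $\cS$,
\begin{equation}
\big(U\rho U^\dagger\big)_\cS \;=\; \mathrm{Tr}_{\cS^c}\!\big[(U_\cS\otimes U_{\cS^c})\,\rho\,(U_\cS\otimes U_{\cS^c})^\dagger\big] \;=\; U_\cS\,\rho_\cS\,U_\cS^\dagger ,
\end{equation}
which follows by pulling $U_\cS$ outside the partial trace and using cyclicity of the trace over $\cS^c$ to cancel $U_{\cS^c}$ against $U_{\cS^c}^\dagger$.

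Now suppose $\rho'$ were not $k$-UDA; then there is a state $\tau\neq\rho'$ sharing all $k$-partite marginals with $\rho'$, i.e. $\tau_\cS = (\rho')_\cS = U_\cS\rho_\cS U_\cS^\dagger$ for every $\cS$ with $|\cS|=k$. Define $\tau' := U^\dagger \tau U$, which is again a legitimate state since conjugation by a unitary preserves positivity, Hermiticity and trace. Applying the displayed identity with $U$ replaced by $U^\dagger$ (so that $(U^\dagger)_\cS = U_\cS^\dagger$) gives $\tau'_\cS = U_\cS^\dagger \tau_\cS U_\cS = U_\cS^\dagger\big(U_\cS\rho_\cS U_\cS^\dagger\big)U_\cS = \rho_\cS$ for all $|\cS|=k$; hence $\tau'$ has exactly the same $k$-partite marginals as $\rho$. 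Moreover $\tau\neq\rho'$ forces $\tau' = U^\dagger\tau U \neq U^\dagger\rho' U = \rho$, because $U^\dagger(\cdot)U$ is injective. This contradicts the assumption that $\rho$ is $k$-UDA, so no such $\tau$ exists and $\rho'$ is $k$-UDA. Since $\rho'$ was an arbitrary state LU equivalent to $\rho$, the lemma follows.

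I do not expect a genuine obstacle here; the argument is essentially bookkeeping. The one point that deserves care is the marginal--LU commutation identity: it is crucial that only the local unitaries on the traced-out parties $\cS^c$ disappear while those on $\cS$ survive as a conjugation, which is exactly the place where ``local'' (product) unitary, rather than arbitrary global unitary, enters. I would also record, for possible reuse later, that the same transport map $\sigma\mapsto U^\dagger\sigma U$ gives a bijection between the states compatible with the $k$-marginals of $\rho'$ and those compatible with the $k$-marginals of $\rho$; phrasing the proof through this bijection is the cleanest formulation, and it makes the invariance of the $k$-UDA property under LU equivalence immediate.
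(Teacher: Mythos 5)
Your proof is correct and follows essentially the same route as the paper's: both argue by contradiction, transport a hypothetical competitor $\tau$ for the rotated state back to $U^\dagger\tau U$, and use the commutation of partial trace with local unitaries to show it would contradict the $k$-UDA property of $\rho$. No gaps; the bijection remark at the end is a nice reusable observation but not needed here.
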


\begin{proof}
We prove it by contradiction. Suppose that $\rho$ is an $n$-partite $k$-UDA state. Next, assume that $\sigma$ is another $n$-partite state which is LU equivalent to $\rho$ but not $k$-UDA. It follows that $\sigma=(U_1\ox\cdots\ox U_n)\rho (U_1\ox\cdots\ox U_n)^\dg$ for some unitary operators $U_1,\cdots,U_n$. Due to the assumption, we conclude by definition that there exists an $n$-partite state $\alpha(\neq \sigma)$ all of whose $k$-partite marginals are the same as $\sigma$. Let $\tilde{\alpha}=(U_1\ox\cdots\ox U_n)^\dg \alpha (U_1\ox\cdots\ox U_n)$. It is obvious that $\tilde{\alpha}\neq \rho$. However, one can verify that for any subset $\cS\subset [n]$ with $\abs{\cS}=k$,
\begin{equation}
\label{eq:luequiv}
\begin{aligned}
\tilde{\alpha}_{\cS}&:=\tr_{\cS^c}(\tilde{\alpha}) \\
&=\tr_{\cS^c}\bigg((U_1\ox\cdots\ox U_n)^\dg \alpha (U_1\ox\cdots\ox U_n)\bigg) \\
&=\big(\bigox_{j\in\cS} U_j\big)^\dg \tr_{\cS^c}(\alpha) \big(\bigox_{j\in\cS} U_j\big) \\
&=\big(\bigox_{j\in\cS} U_j\big)^\dg \tr_{\cS^c}(\sigma) \big(\bigox_{j\in\cS} U_j\big) \\
&=\tr_{\cS^c}\bigg((U_1\ox\cdots\ox U_n)^\dg \sigma (U_1\ox\cdots\ox U_n)\bigg) \\
&=\tr_{\cS^c} (\rho)=\rho_{\cS}.
\end{aligned}
\end{equation}
It implies that $\tilde{\alpha}$ shares all the same $k$-partite marginals as $\rho$, and we obtain a contradiction. Thus, we conclude that every state that is LU equivalent to $\rho$ is also $k$-UDA. This completes the proof.
\end{proof}

The following lemma reveals the relation between the set of $k$-UDA states and the set of $(k+1)$-UDA states, in terms of arbitrary states.

\begin{lemma}
\label{le:ktok+1}
If $\rho$ is $k$-UDA, then $\rho$ is also $(k+1)$-UDA. However, the converse is generally incorrect.
\end{lemma}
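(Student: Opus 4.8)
The plan is to prove the forward implication by a direct contrapositive/marginal-inflation argument, and then to exhibit a concrete counterexample for the converse. For the forward direction, suppose $\rho$ is $k$-UDA on parties $[n]$ (and implicitly $k+1\le n$, otherwise the statement is vacuous or trivial). To show $\rho$ is $(k+1)$-UDA, let $\sigma$ be any state with the same $(k+1)$-partite marginals as $\rho$; I want to conclude $\sigma=\rho$. The key observation is that every $k$-partite marginal is obtained from some $(k+1)$-partite marginal by tracing out one more party: for any $\cS\subset[n]$ with $\abs{\cS}=k$, pick any $j\in\cS^c$ (possible since $k<n$), set $\cT=\cS\cup\{j\}$, and note $\rho_{\cS}=\tr_j(\rho_{\cT})=\tr_j(\sigma_{\cT})=\sigma_{\cS}$. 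Hence $\rho$ and $\sigma$ agree on all $k$-partite marginals, and since $\rho$ is $k$-UDA, $\sigma=\rho$. This is the entire content of the first sentence and it is routine; I would write it in two or three lines, mirroring the marginal-tracing manipulation already used in the proof of Lemma~\ref{le:luequiv}.

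The substantive part is the second sentence: the converse fails in general, i.e.\ there exist states that are $(k+1)$-UDA but not $k$-UDA. I would establish this by citing/constructing an explicit small example, most naturally with $k=1$. The cleanest candidates are (i) a generic three-qubit pure state, which by \cite{3qbuda} is $2$-UDA (indeed determined by its two-party marginals) but whose one-party marginals plainly do not determine it — any two-qubit state with the same single-qubit reductions but different correlations, or more simply a different pure state with the same single-qubit marginals, witnesses non-$1$-UDA; or (ii) the $n$-qubit GHZ state, which is a standard example of a state not determined by a small number of its marginals. I would phrase the example as: take $\ket{\psi}$ a generic pure state in $\cH_{A_1}\ox\cH_{A_2}\ox\cH_{A_3}$; by \cite{3qbuda} it is $2$-UDA, but it is not $1$-UDA because its single-party marginals $\rho_{A_1},\rho_{A_2},\rho_{A_3}$ are also the marginals of, e.g., a suitable mixed (or different pure) state sharing those reductions — the maximally mixed single-qubit case makes this transparent for $\ket{\mathrm{GHZ}}$, whose one-qubit reductions are all $I/2$ and hence shared by the separable mixture $\tfrac12(\proj{000}+\proj{111})$ as well as by $\ket{\mathrm{GHZ}}$ itself, so $\ket{\mathrm{GHZ}}$ is not $1$-UDA while being $2$-UDA by \cite{3qbuda} (excluding it from the measure-zero exceptional set, or using a generic nearby state). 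I will pick whichever example the paper finds most economical to state.

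The main obstacle — and it is a mild one — is just making sure the converse counterexample is stated with honest hypotheses: ``almost every pure three-qubit state is $2$-UDA'' carries a measure-zero caveat, and GHZ is in some formulations exactly the exceptional case, so I must either invoke a genuinely generic state (for which $2$-UDA holds and $1$-UDA manifestly fails, since single-qubit marginals never pin down a tripartite state), or pick a specific non-exceptional state and check both properties by hand. I do not anticipate any difficulty with the forward implication; the only thing to be careful about there is the degenerate range $k\ge n$, which I would dispatch with a one-clause remark (either the statement is understood only for $k+1\le n$, or both notions coincide trivially when $k\ge n-1$ since then the $k$-marginals already include the global state up to relabeling). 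Overall the proof is short: a three-line inflation argument plus a one-paragraph example with a pointer to \cite{3qbuda}.
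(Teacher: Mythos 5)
Your proposal matches the paper's proof in substance: the forward direction is the same marginal-inflation argument (every $k$-partite marginal arises by tracing one more party from a $(k+1)$-partite marginal, so agreement on the latter forces agreement on the former), merely phrased directly rather than by contradiction, and your counterexample for the converse --- a generic three-qubit pure state that is $2$-UDA by \cite{3qbuda} but not determined by its single-body marginals --- is exactly the one the paper uses. Your extra care about the measure-zero caveat and the degenerate range $k\ge n$ goes slightly beyond what the paper writes but does not change the argument.
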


\begin{proof}
We prove it by contradiction. Assume that $\rho$ is an $n$-partite state which is $k$-UDA but not $(k+1)$-UDA. Due to the assumption, it implies by definition that there is a state $\sigma(\neq \rho)$ having all the same $(k+1)$-partite marginals as $\rho$. That is $\rho_{\cS}=\sigma_{\cS}$ for any subset $\cS\subset [n]$ with $\abs{\cS}=k+1$. Further, any $k$-partite marginal can be generated from some $(k+1)$-partite marginal by tracing one more subsystem. It follows that
\begin{equation}
\label{eq:ktok+1-1}
\tr_{j\in\cS}(\rho_{\cS})=\tr_{j\in\cS}(\sigma_{\cS}), ~~\forall ~\cS.
\end{equation}
According to Eq. \eqref{eq:ktok+1-1} we conclude that $\sigma$ also has all the same $k$-partite marginals as $\rho$. It contradicts the assumption that $\rho$ is $k$-UDA, and thus $\rho$ have to be $(k+1)$-UDA. For the converse, one can verify it is incorrect by the following example. It is known that generic three-qubit pure states are $2$-UDA \cite{3qbuda} but cannot be fixed by their single-body marginals. This completes the proof.
\end{proof}

From Lemma \ref{le:ktok+1}, it implies that the set of $k$-UDA states is strictly included in the set of $(k+1)$-UDA states. We shall use Lemma \ref{le:ktok+1} to derive Corollary \ref{cor:mixed-k} in the next section.
Moreover, the marginals of fewer parties are easier to measure in experiments. Thus, from a experimental perspective, it is more valuable to know the $k$-UDA states with the number of parties $k$ as small as possible.

\section{The additivity of UDA states for different composite ways of tensor product}
\label{sec:add}

In this section we propose the concept of \emph{additivity} on $k$-UDA states, and study whether the additivity of $k$-UDA states holds for different composite ways of tensor product. For each composite way defined in Definition \ref{def:prods}, we call the \emph{additivity} of two $k$-UDA states if the corresponding product of them is still $k$-UDA.

We consider the additivity of $k$-UDA states for the following reasons. 
First, recall that the uniqueness by marginals depends on a fraction of parties sharing the marginal reductions \cite{udaubound,npudalbound}. As the number of all parties of the global state increases, only when the number of parties of the marginals is large enough can the uniqueness be ensured. Therefore, it is necessary to identify which states of large number of parties can be UDA by their $k$-partite marginals for $k$ much smaller than the fraction of all parties. One direction is to enlarge the number of parties of the global system while fixing the number $k$. In this view, we adopt the method of constructing GME states proposed in Ref. \cite{geshen2020} to generate $k$-UDA states in the systems of large number of parties. Furthermore, from an experimental perspective, measuring $k$-partite marginals for smaller $k$ is more practical. It also leads us to consider fixing the number of parties of the marginals. 
Second, by Definition \ref{def:prods} (i), we may expand the number of parties and enlarge local dimensions simultaneously via the proposed $K_c$-product. Hence, this product provides a tool to study the $k$-UDA states in the systems of distinct local dimensions. In some problems related to the correlation between the whole and parts, whether the local dimensions of a system are the same will result in essentail differences. For instance, the existence of absolutely maximally entangled states and $k$-uniform states in the systems of distinct local dimensions is quite different from that in the systems of equal local dimensions \cite{Shen2021,SF2022}.
Third, since the $K_c$-product is effective to construct GME states from Ref. \cite{geshen2020}, it is possible to construct multipartite $k$-UDA states with genuine multipartite entanglement by uniting such construction of GME states and the additivity of $k$-UDA states. Due to the extensive use of GME states, it is valuable to perform QST based on the genuinely entangled $k$-UDA states. We shall detailedly discuss the construction of genuinely entangled $k$-UDA states in Sec. \ref{sec:geuda}.


Based on Definition \ref{def:prods}, we specifically consider whether such three composite states $\rho\ox\sigma$, $\rho\ox_K\sigma$ and $\rho\ox_{K_c}\sigma$ are still $k$-UDA states for two $k$-UDA states $\rho$ and $\sigma$. We start from assuming that one of $\rho$ and $\sigma$ is pure. Under this assumption, we may explicitly formulate the expressions of the composite states as follows, under each type of tensor product.

\begin{lemma}
\label{le:preduced}
(i) If the reduction of system $(A_1,\cdots,A_m)$ from the global state $\rho_{A_1\cdots A_m E}$ is a pure state, then the global state is in the form as 
$$\rho_{A_1\cdots A_m E}=\proj{\psi}_{A_1\cdots A_m}\ox\sigma_E.$$ 

(ii) If the reduction of system $(A_1,\cdots,A_m)$ from an $n$-partite state $\rho_{(A_1B_1)\cdots(A_mB_m)B_{m+1}\cdots B_n}$ is a pure state, then the $n$-partite global state is in the form as 
$$\rho_{(A_1B_1)\cdots(A_mB_m)B_{m+1}\cdots B_n}=\proj{\psi}_{A_1\cdots A_m}\ox_K\gamma_{B_1\cdots B_n}.$$

(iii) If the reduction of system $(A_1,\cdots,A_m)$ from an $(m+n-l)$-partite state $\rho_{C_1\cdots C_l A_{l+1}\cdots A_m B_{l+1}\cdots B_n}$ is a pure state, where $C_j=A_jB_j$ for $1\leq j\leq l$ and $l<\min\{m,n\}$, then the $(m+n-l)$-partite global state is in the form as 
$$\rho_{C_1\cdots C_l A_{l+1}\cdots A_m B_{l+1}\cdots B_n}=\proj{\psi}_{A_1\cdots A_m}\ox_{K_c}\delta_{B_1\cdots B_n}.$$
\end{lemma}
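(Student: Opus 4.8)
The plan is to prove all three parts from a single observation about purity of a marginal, and then read off the specific tensor structures. The core fact is elementary: if $\rho_{XY}$ is a bipartite state whose reduction $\rho_X = \tr_Y(\rho_{XY})$ is pure, say $\rho_X = \proj{\psi}_X$, then $\rho_{XY} = \proj{\psi}_X \ox \rho_Y$ where $\rho_Y = \tr_X(\rho_{XY})$. One way I would argue this: purify $\rho_{XY}$ to $\ket{\Phi}_{XYR}$; then $\tr_{YR}(\proj{\Phi}) = \proj{\psi}_X$ has rank one, so the Schmidt decomposition of $\ket{\Phi}$ across the cut $X$ versus $YR$ has a single term, i.e.\ $\ket{\Phi}_{XYR} = \ket{\psi}_X \ox \ket{\chi}_{YR}$. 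Tracing out $R$ gives $\rho_{XY} = \proj{\psi}_X \ox \tr_R(\proj{\chi}_{YR}) = \proj{\psi}_X \ox \rho_Y$. Alternatively, one can avoid purification: write $\rho_{XY} = \sum_i p_i \proj{\eta_i}_{XY}$ in its eigenbasis; then $\proj{\psi}_X = \sum_i p_i \tr_Y \proj{\eta_i}$ is a convex combination of states, and since a pure state is an extreme point of the state space, each $\tr_Y\proj{\eta_i} = \proj{\psi}_X$, forcing each $\ket{\eta_i}_{XY} = \ket{\psi}_X \ox \ket{\phi_i}_Y$; summing gives the product form. I would state this as a one-line preliminary observation (essentially the standard fact that a pure marginal factors out) and use it three times.

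For part (i), apply the observation with $X = A_1\cdots A_m$ and $Y = E$: since $\rho_{A_1\cdots A_m}$ is pure and equal to $\proj{\psi}_{A_1\cdots A_m}$ by hypothesis, we get $\rho_{A_1\cdots A_m E} = \proj{\psi}_{A_1\cdots A_m}\ox\sigma_E$ with $\sigma_E := \rho_E$, and by Definition~\ref{def:prods}(i) the right-hand side is exactly the tensor product $\proj{\psi}_{A_1\cdots A_m}\ox\sigma_E$. For part (ii), the only subtlety is bookkeeping: group the $n$ parties of the global state as the bipartition $X = A_1\cdots A_m$ (the $A$-factors sitting inside $(A_jB_j)$ for $j\le m$) versus $Y = B_1\cdots B_n$ (the $B_j$-factors for all $j$, including the bare $B_j$ for $j>m$). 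The hypothesis says $\tr_Y$ of the global state is the pure state $\proj{\psi}_{A_1\cdots A_m}$, so the observation gives the global state $= \proj{\psi}_{A_1\cdots A_m}\ox\gamma_{B_1\cdots B_n}$ with $\gamma_{B_1\cdots B_n} := \tr_X(\rho)$; recognizing this product, where the $A_j$ and $B_j$ factors are interleaved into the composite parties $(A_jB_j)$ for $j\le m$, as precisely the Kronecker product of Definition~\ref{def:prods}(ii) finishes it. Part (iii) is identical with the bipartition $X = A_1\cdots A_m$ (the $A_j$ inside $C_j = A_jB_j$ for $j\le l$, together with the bare $A_j$ for $l<j\le m$) versus $Y = B_1\cdots B_n$, and the resulting product $\proj{\psi}_{A_1\cdots A_m}\ox\delta_{B_1\cdots B_n}$ with $\delta := \tr_X(\rho)$ matches the $K_c$-product of Definition~\ref{def:prods}(iii) by the very definition of that product.

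I do not expect a genuine obstacle here; the mathematical content is the single lemma that a pure marginal factors out, which is standard. The only thing requiring care is the indexing in parts (ii) and (iii) — making sure the partition into ``all the $A$-labels'' versus ``all the $B$-labels'' is the one matching the composite-system conventions set up in Section~\ref{sec:pre} and Definition~\ref{def:prods}, so that after the factorization the remaining factor $\gamma_{B_1\cdots B_n}$ (resp.\ $\delta_{B_1\cdots B_n}$) is indeed an $n$-partite state on $\cH_{B_1}\ox\cdots\ox\cH_{B_n}$ and the product recombines the indices in the prescribed way. If one wants to also justify that such a $\gamma$ or $\delta$ is unique, note it is forced: $\gamma_{B_1\cdots B_n} = \tr_{A_1\cdots A_m}(\rho)$, so the decomposition in each part is canonical, not merely existential.
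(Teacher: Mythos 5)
Your proposal is correct and follows essentially the same route as the paper: both reduce everything to the standard fact that a pure marginal factors out, proved by decomposing the global state into pure components and arguing each component must be a product vector with $A$-part proportional to $\ket{\psi}$ (your extreme-point phrasing and the paper's rank-one argument are the same observation), and then apply it to the three bipartitions of Definition~\ref{def:prods}. No gap to report.
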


\begin{proof}
(i) Suppose that the global state has the following decomposition:
\begin{equation}
\label{eq:preduced-1}
\rho_{A_1\cdots A_m E}=\sum_j\proj{x_j}_{A_1\cdots A_m E}. 
\end{equation}
Since the reduced state of system $(A_1,\cdots,A_m)$ is pure, we conclude that each $\ket{x_j}_{A_1\cdots A_m E}$ in Eq. \eqref{eq:preduced-1} is a product vector in the bipartition $(A_1\cdots A_m) | E$. Otherwise, the reduced state of system $(A_1,\cdots,A_m)$ is of rank greater than one. Thus, we may assume 
\beq
\label{eq:preduced-1a}
\rho_{A_1\cdots A_m E}=\sum_j \proj{\phi_j}_{A_1\cdots A_m}\ox\proj{\a_j}_E.
\eeq
By calculation, $\tr_E(\rho_{A_1\cdots A_mE})=\sum_j a_j\proj{\phi_j}_{A_1\cdots A_m}$, where $a_j=\braket{\a_j}{\a_j}$. This reduction is a pure state, and we may assume $\tr_E\rho_{A_1\cdots A_mE}=\proj{\psi}_{A_1\cdots A_m}$. It implies that each $\ket{\phi_j}$ is proportional to $\ket{\psi}$. Then we conclude that $\rho_{A_1\cdots A_mE}=\proj{\psi}_{A_1\cdots A_m}\ox\sigma_E$.

(ii) The proof is similar to assertion (i). Suppose that the global state has the following decomposition:
\begin{equation}
\label{eq:preduced-2}
\rho_{C_1\cdots C_m B_{m+1}\cdots B_n}=\sum_j\proj{y_j}_{C_1\cdots C_m B_{m+1}\cdots B_n},
\end{equation}
where $C_j:=A_jB_j$ for $1\leq j\leq m$. Since the reduced state of system $(A_1,\cdots,A_m)$ is a pure state whose rank is only one, we conclude that each $\ket{y_j}$ in Eq. \eqref{eq:preduced-2} is a Kronecker product of two vectors in system $(A_1, \cdots, A_m)$ and system $(B_1,\cdots, B_n)$ respectively. According to the similar discussion in (i), we derive that 
$$\rho_{(A_1B_1)\cdots(A_mB_m)B_{m+1}\cdots B_n}=\proj{\psi}_{A_1\cdots A_m}\ox_K\gamma_{B_1\cdots B_n}.$$

(iii) Since the $K_c$-product proposed in Definition \ref{def:prods} (iii) is a joint use of the tensor product and the Kronecker product, we similarly derive assertion (iii) according to the discussion on the first two assertions.

This completes the proof.
\end{proof}

By virtue of the essential expressions in Lemma \ref{le:preduced}, we can show that two $k$-UDA states admit the additivity under each type of tensor product in the case when one of the two initial states is pure.

\begin{theorem}
\label{thm:add-p+m}
Suppose that $\a$ and $\b$ are two $k$-UDA states of systems $(A_1,\cdots,A_m)$ and $(B_1,\cdots,B_n)$ respectively. If one of $\a$ and $\b$ is pure, then 

(i) $\a\ox\b$ is an $(m+n)$-partite $k$-UDA state of system $(A_1,\cdots,A_m,B_1,\cdots,B_n)$;

(ii) $\a\ox_K\b$ is an $\ell$-partite $k$-UDA state of system $(C_1,\cdots,C_{\ell})$, where $\ell=\max\{m,n\}$ and $C_i:=(A_iB_i)$ for $i=1,\cdots,\ell$.

(iii) $\a\ox_{K_c}\b$ is an $(m+n-\ell)$-partite $k$-UDA state of system $(C_1,\cdots,C_{\ell},A_{\ell+1},\cdots,A_m,B_{\ell+1},\cdots,n)$, where $\ell\leq\min\{m,n\}$ and $C_i:=(A_iB_i)$ for $i=1,\cdots,\ell$.
\end{theorem}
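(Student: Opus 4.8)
The plan is to prove all three parts by the same contradiction argument, using Lemma~\ref{le:preduced} to reduce the general case to the essentially identical ``pure $\otimes$ mixed'' situation. First I would fix notation: assume WLOG that $\a=\proj{\psi}_{A_1\cdots A_m}$ is the pure factor and $\b$ is the $k$-UDA state on $(B_1,\cdots,B_n)$, so that in case (i) the composite state is $\rho=\proj{\psi}_{A_1\cdots A_m}\ox\b_{B_1\cdots B_n}$ on the $(m+n)$-partite system $(A_1,\cdots,A_m,B_1,\cdots,B_n)$. Suppose for contradiction that $\rho$ is not $k$-UDA, so there is $\tau\neq\rho$ with all the same $k$-partite marginals as $\rho$. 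The first key step is to observe that for any $k$-subset $\cS$ of the $A$-parties, $\tau_\cS=\rho_\cS=\proj{\psi}_\cS$ is pure; then applying Lemma~\ref{le:preduced}(i) (with $E$ the complement of $\cS$) to each such $\cS$, and in particular taking $\cS$ to be any $k$-subset of $\{A_1,\dots,A_m\}$ (using $k\le m$; if $k>m$ one instead uses a $k$-subset containing all of the $A$-parties together with some $B$-parties, and the same rank-one argument forces the tensor split) one deduces that $\tau$ itself factorizes as $\tau=\proj{\psi}_{A_1\cdots A_m}\ox\tau'_{B_1\cdots B_n}$ for some state $\tau'$ on the $B$-system. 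The second key step: since $\rho$ and $\tau$ agree on all $k$-party marginals, restricting to $k$-subsets $\cS$ lying entirely inside the $B$-parties gives $\tau'_\cS=\b_\cS$ for every such $\cS$; because $\b$ is $k$-UDA this forces $\tau'=\b$, hence $\tau=\rho$, contradicting $\tau\neq\rho$. (One should also dispose of the case $n<k$, where a $k$-subset of the $B$-system does not exist: then one uses subsets mixing $A$- and $B$-parties, but since $\tau$ has already been shown to split off the pure $A$-factor, the $B$-marginal on any $k-(\text{something})$ block is pinned down, and again $k$-UDA-ness of $\b$ finishes it; alternatively note $\b$ being $k$-UDA on $n<k$ parties is automatic/degenerate.)

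For part~(ii), the argument is identical in structure but starts from Lemma~\ref{le:preduced}(ii): writing $\rho=\proj{\psi}_{A_1\cdots A_m}\ox_K\b_{B_1\cdots B_n}$ on the $\ell$-partite system $(C_1,\dots,C_\ell)$ with $\ell=\max\{m,n\}=n$ and $C_i=(A_iB_i)$ for $i\le m$, a hypothetical $\tau$ with the same $k$-partite marginals has, on any $k$-subset $\cS\subseteq\{C_1,\dots,C_m\}$, a marginal whose further reduction to the $A$-parts is $\proj{\psi}_{A_\cS}$, hence is rank one; Lemma~\ref{le:preduced}(ii) then forces $\tau=\proj{\psi}_{A_1\cdots A_m}\ox_K\tau'_{B_1\cdots B_n}$, and comparing $B$-marginals (now taking $\cS$ to range over $k$-subsets of $\{C_1,\dots,C_\ell\}$ and tracing out the $A$-factor, which is legitimate because the $A$-factor is fixed) yields $\tau'_{\cS\cap B}=\b_{\cS\cap B}$ for all $k$-subsets of the $B$-system, so $\tau'=\b$ by $k$-UDA-ness of $\b$. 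Part~(iii) follows by combining the two: Lemma~\ref{le:preduced}(iii) gives the split $\tau=\proj{\psi}_{A_1\cdots A_m}\ox_{K_c}\delta$, and the same marginal-matching plus $k$-UDA-ness of $\b$ identifies $\delta$ with $\b$.

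The step I expect to be the main obstacle is the \emph{rigorous extraction of the tensor (or Kronecker) factorization of the hypothetical state $\tau$} from purity of marginals. Lemma~\ref{le:preduced} is stated for a state whose $A$-reduction is genuinely pure; for $\tau$ we only know directly that its $k$-party marginals coincide with those of $\rho$, so one must argue carefully that $\tr_{(\text{all }B)}\tau=\proj{\psi}_{A_1\cdots A_m}$ (this requires $m\ge k$, or else an extra argument stitching together overlapping $k$-subsets of the $A$-system to conclude the full $A$-marginal is pure — e.g. if the $(A_1\cdots A_k)$-marginal is $\proj{\psi_{1\cdots k}}$ and similarly for shifted windows, and these are consistent, the joint $A$-state is rank one). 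The cleanest route is: from $k$-UDA-ness, note $k\le m$ is \emph{not} actually needed if we argue instead that any $k$-subset of the full $(m+n)$ (resp.\ $\ell$, resp.\ $m+n-\ell$) parties that contains at least one $A$-labelled party has a marginal that, restricted to its $A$-parties, is a marginal of $\proj{\psi}$, hence rank one across the $A\mid(\text{rest})$ cut within that marginal — and then a standard ``consistent pure marginals glue to a pure global state on their union'' lemma promotes this to purity of the full $A$-marginal of $\tau$; once that is in hand, Lemma~\ref{le:preduced} applies verbatim and the rest is the routine marginal comparison. I would flag this gluing sub-step explicitly and either cite it or prove it in a line, since it is the only place where the ``one factor is pure'' hypothesis does real work beyond a direct appeal to Lemma~\ref{le:preduced}.
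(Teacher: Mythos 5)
Your overall architecture matches the paper's: show that any competitor state $\tau$ with the same $k$-partite marginals as the product must have $\proj{\psi}$ as its full $A$-marginal, invoke Lemma~\ref{le:preduced} to factorize $\tau$, and then use the $k$-UDA property of $\b$ to pin down the remaining factor. However, the step you use to establish purity of the $A$-marginal of $\tau$ is wrong as written, and the repair you propose does not work. You assert that for a $k$-subset $\cS$ of the $A$-parties, $\tau_{\cS}=\proj{\psi}_{\cS}$ ``is pure,'' and later that such marginals are ``rank one across the $A\mid(\text{rest})$ cut.'' But $\proj{\psi}_{\cS}=\tr_{\cS^c}\proj{\psi}$ is the reduced state of a pure state on a proper subset of its parties, which is generically mixed (it is mixed whenever $\ket{\psi}$ is entangled across the $\cS\mid\cS^{c}$ cut). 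Consequently the ``consistent pure marginals glue to a pure global state'' lemma you invoke has nothing to act on, and Lemma~\ref{le:preduced} cannot be applied to the individual $k$-partite marginals of $\tau$.

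The correct route --- and the one the paper takes --- needs none of this. For every $k$-subset $\cS\subseteq[m]$ one has $\tau_{A_{\cS}}=\rho_{A_{\cS}}=\a_{A_{\cS}}$ directly from the hypothesis on $k$-partite marginals (trace the assumed equality over all $B$-parties and over $A_{\cS^c}$). Hence the full $A$-marginal $\tr_{B}(\tau)$ shares \emph{all} $k$-partite marginals with $\a$, and since $\a$ is itself assumed $k$-UDA, $\tr_{B}(\tau)=\a=\proj{\psi}$. Purity of the full $A$-marginal is therefore an immediate consequence of the $k$-UDA hypothesis on $\a$ --- no gluing, no case split on $k\le m$ versus $k>m$, and no purity of sub-marginals is needed. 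Once $\tr_{B}(\tau)=\proj{\psi}$ is in hand, Lemma~\ref{le:preduced} applies to $\tau$ itself and the rest of your argument (comparing $B$-marginals and using the $k$-UDA property of $\b$) goes through verbatim; the same one-line repair works for parts (ii) and (iii). You should replace your ``first key step'' and the closing gluing discussion with this direct appeal to the $k$-UDA property of $\a$.
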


\begin{proof}
First of all, we may assume that $\a_{A_1\cdots A_m}=\proj{\psi}$ without loss of generality. Denote by $A_{[m]}$ the $m$-partite system $(A_1,\cdots, A_m)$, and similarly by $B_{[n]}$ the $n$-partite system $(B_1,\cdots, B_n)$.

(i) Let $\rho_{A_{[m]}B_{[n]}}=\a_{A_1\cdots A_m}\ox\b_{B_1\cdots B_n}$. Suppose that $\sigma_{A_{[m]}B_{[n]}}$ is an $(m+n)$-partite state of system $(A_{[m]},B_{[n]})$, which shares all the same $k$-partite marginals as $\rho_{A_{[m]}B_{[n]}}$. Let $\sigma_{A_{[m]}}=\tr_{B_{[n]}}(\sigma_{A_{[m]}B_{[n]}})$. We claim that $\sigma_{A_{[m]}}$ shares all the same $k$-partite marginals as $\a_{A_{[m]}}$ for the following reason. According to the assumption, we obtain that for any subset $\cS\subset [m]$ with $\abs{\cS}=k$,
\begin{equation}
\label{eq:add-p+m-1}
\begin{aligned}
\sigma_{A_{\cS}}&:=\tr_{A_{\cS^c}} (\sigma_{A_{[m]}}) \\
&~=\tr_{A_{\cS^c}}\left[\tr_{B_{[n]}}(\sigma_{A_{[m]}B_{[n]}})\right] \\
&~\equiv \tr_{A_{\cS^c}B_{[n]}}(\sigma_{A_{[m]}B_{[n]}}) \\
&~=\tr_{A_{\cS^c}B_{[n]}}(\rho_{A_{[m]}B_{[n]}}) \\
&~=\tr_{A_{\cS^c}}(\a_{A_{[m]}})\equiv\a_{A_{\cS^c}}.
\end{aligned}
\end{equation}
Recall that $\a_{A_{[m]}}$ is a $k$-UDA state. It follows from Eq. \eqref{eq:add-p+m-1} that $\sigma_{A_{[m]}}$ has to be equal to $\a_{A_{[m]}}$. 

Since $\a_{A_{[m]}}$ is pure, it follows from Lemma \ref{le:preduced} (i) that 
\beq
\label{eq:add-p+m-1a}
\sigma_{A_{[m]}B_{[n]}}=\proj{\psi}_{A_{[m]}}\ox\gamma_{B_{[n]}}.
\eeq
Similar to the discussion above, we also claim that $\gamma_{B_{[n]}}$ shares all the same $k$-partite marginals as $\b_{B_{[n]}}$. Since $\b_{B_{[n]}}$ is also $k$-UDA, it follows that $\gamma_{B_{[n]}}$ has to be equal to $\b_{B_{[n]}}$. It follows from Eq. \eqref{eq:add-p+m-1a} that $\sigma_{A_{[m]}B_{[n]}}$ is the tensor product of $\a_{A_{[m]}}$ and $\b_{B_{[n]}}$, which means $\sigma_{A_{[m]}B_{[n]}}$ must be identical to $\rho_{A_{[m]}B_{[n]}}$. Thus, by definition $\rho_{A_{[m]}B_{[n]}}$ is uniquely determined by its $k$-partite marginals.

(ii) The proof is similar to that of assertion (i). Without loss of generality, we may assume $m\leq n$, i.e. $n=\ell=\max\{m,n\}$. Let $\rho_{C_{[n]}}=\a_{A_{[m]}}\ox_K\b_{B_{[n]}}$, where the composite system $C_j=(A_j B_j)$ for $1\leq j \leq m$ and $C_j=B_j$ for $m+1\leq j\leq n$ for simplicity.
Denote by $\rho_{\cS}$ the reduced state of system $C_{\cS},~\forall~\cS\subset[n]$. Suppose that $\sigma_{C_{[n]}}$ is an $n$-partite state compatible with the marginal set $\{\rho_{\cS}~|~\forall \cS,~ \abs{\cS}=k\}$. Let $\sigma_{A_{[m]}}=\tr_{B_{[n]}}(\sigma_{C_{[n]}})$. We claim that $\sigma_{A_{[m]}}$ shares all the same $k$-partite marginals as $\a_{A_{[m]}}$ for the following reason. According to the assumption, for any subset $\tilde{\cS}\subset [m]$ with $\abs{\tilde{\cS}}=k$ and its complementary set $\tilde{\cS}^c=[m]-\tilde{\cS}$, we obtain that
\begin{equation}
\label{eq:add-p+m-2}
\begin{aligned}
\sigma_{A_{\tilde{\cS}}}&:=\tr_{A_{\tilde{\cS}^c}}(\sigma_{A_{[m]}}) \\
&~=\tr_{A_{\tilde{\cS}^c}}\left[\tr_{B_{[n]}}(\sigma_{C_{[n]}})\right] \\
&~=\tr_{B_{\tilde{\cS}}}\left[\tr_{C_{\tilde{\cS}^c} C_{[n\backslash m]}}(\sigma_{C_{[n]}})\right] \\
&~=\tr_{B_{\tilde{\cS}}}\left[\tr_{C_{\tilde{\cS}^c} C_{[n\backslash m]}}(\rho_{C_{[n]}})\right] \\
&~=\tr_{A_{\tilde{\cS}^c}}\left[\tr_{B_{[n]}}(\rho_{C_{[n]}})\right] \\
&~=\tr_{A_{\tilde{\cS}^c}}(\a_{A_{[m]}})\equiv\a_{A_{\tilde{\cS}}},
\end{aligned}
\end{equation}
where $[n\backslash m]:=[n]-[m]=\{m+1,\cdots,n\}$.
Recall that $\a_{A_{[m]}}$ is $k$-UDA. It follows from Eq. \eqref{eq:add-p+m-2} that $\sigma_{A_{[m]}}$ has to be equal to $\a_{A_{[m]}}$. 

Since $\sigma_{A_{[m]}}\equiv\a_{A_{[m]}}=\proj{\psi}$ is pure, it follows from Lemma \ref{le:preduced} (ii) that 
\beq
\label{eq:add-p+m-2a}
\sigma_{C_{[n]}}=\proj{\psi}_{A_{[m]}}\ox_K\gamma_{B_{[n]}}. 
\eeq
Similar to the discussion above, we claim that $\gamma_{B_{[n]}}$ shares all the same $k$-partite marginals as $\b_{B_{[n]}}$. Since $\b_{B_{[n]}}$ is also $k$-UDA, we obtain that $\gamma_{B_{[n]}}=\b_{B_{[n]}}$. It follows from Eq. \eqref{eq:add-p+m-2a} that $\sigma_{C_{[n]}}$ has to be the Kronecker product of $\a_{A_{[m]}}$ and $\b_{B_{[n]}}$, which means $\sigma_{C_{[n]}}$ must be identical to $\rho_{C_{[n]}}$. Thus, by definition $\rho_{C_{[n]}}$ is uniquely determined by its $k$-partite marginals.

(iii) For any $\ell\leq\min\{m,n\}$, let 
$$\rho_{C_{[\ell]}A_{[m\backslash \ell]}B_{[n\backslash \ell]}}=\alpha_{[m]}\ox_{K_c}\beta_{[n]},$$ 
where the composite system $C_j=A_j B_j$ for $1\leq j\leq \ell$, and $[m\backslash \ell]:=[m]-[\ell]$, $[n\backslash \ell]:=[n]-[\ell]$. Suppose that $\sigma_{C_{[\ell]}A_{[m\backslash \ell]}B_{[n\backslash \ell]}}$ is an $(m+n-\ell)$-partite state compatible with all the $k$-partite marginals from $\rho_{C_{[\ell]}A_{[m\backslash \ell]}B_{[n\backslash \ell]}}$.
Let $\sigma_{A_{[m]}}=\tr_{B_{[n]}}(\sigma_{C_{[\ell]}A_{[m\backslash \ell]}B_{[n\backslash \ell]}})$. We claim that $\sigma_{A_{[m]}}$ shares all the same $k$-partite marginals as $\alpha_{A_{[m]}}$ for the following reason. One can verify that for any subset $S\subset [m]$ with $\abs{S}=k$,
\begin{equation}
\label{eq:add-p+m-3}
\begin{aligned}
\sigma_{A_{\cS}}&:=\tr_{A_{\cS^c}}(\sigma_{A_{[m]}}) \\
&~=\tr_{A_{\cS^c}}\left[\tr_{B_{[n]}}(\sigma_{C_{[\ell]}A_{[m\backslash \ell]}B_{[n\backslash \ell]}})\right]\\
&~=\tr_{C_{\cT}A_{\cS^c-\cT}B_{[n]-\cT}}(\sigma_{C_{[\ell]}A_{[m\backslash \ell]}B_{[n\backslash \ell]}}) \\
&~=\tr_{B_{\cT^c}}(\sigma_{C_{\cT^c}A_{\cS-\cT^c}}),
\end{aligned}
\end{equation}
where $\cT=\cS^c\cap [\ell]$, and $\cT^c$ is the complementary set of $\cT$ in $[\ell]$.
Since $\abs{\cT^c}+\abs{\cS-\cT^c}=\abs{\cS}=k$, it implies that $\sigma_{C_{\cT^c}A_{\cS-\cT^c}}$ is a $k$-partite marginal of the $(m+n-\ell)$-partite state $\sigma_{C_{[\ell]}A_{[m\backslash \ell]}B_{[n\backslash \ell]}}$. According to the assumption that $\sigma_{C_{[\ell]}A_{[m\backslash \ell]}B_{[n\backslash \ell]}}$ shares all the same $k$-partite marginals as $\rho$, it follows from Eq. \eqref{eq:add-p+m-3} that 
\begin{equation}
\label{eq:add-p+m-3a}
\begin{aligned}
\sigma_{A_{\cS}}&=\tr_{B_{\cT^c}}(\rho_{C_{\cT^c}A_{\cS-\cT^c}})\\
&=\alpha_{A_{\cS}}, \quad \forall \abs{\cS}=k.
\end{aligned}
\end{equation}
Recall that $\alpha_{A_{[m]}}$ is $k$-UDA. It follows from Eq. \eqref{eq:add-p+m-3a} that $\sigma_{A_{[m]}}$ has to be equal to $\alpha_{A_{[m]}}$. 

Since $\alpha_{A_{[m]}}=\proj{\psi}$ is pure, it follows from Lemma \ref{le:preduced} (iii) that
\begin{equation}
\label{eq:add-p+m-3b}
\sigma_{C_{[\ell]}A_{[m\backslash \ell]}B_{[n\backslash \ell]}}=\proj{\psi}_{A_{[m]}}\ox_{K_c}\delta_{B_{[n]}}.
\end{equation}
Similar to the above discussion, we claim that $\delta_{B_{[n]}}$ shares all the same $k$-partite marginals as $\beta_{B_{[n]}}$. Since $\beta_{B_{[n]}}$ is also $k$-UDA, we obtain that $\delta_{B_{[n]}}=\beta_{B_{[n]}}$, and $\sigma_{C_{[\ell]}A_{[m\backslash \ell]}B_{[n\backslash \ell]}}$ must be identical to $\rho_{C_{[\ell]}A_{[m\backslash \ell]}B_{[n\backslash \ell]}}$ from Eq. \eqref{eq:add-p+m-3b}.
Thus, by definition $\rho_{C_{[\ell]}A_{[m\backslash \ell]}B_{[n\backslash \ell]}}$ is uniquely determined by its $k$-partite marginals.

This completes the proof.
\end{proof}

Based on Theorem \ref{thm:add-p+m} and its proof, we obtain two direct corollaries as follows. First, recall that the $k$-UDA states are typically defined on pure states. Therefore, Theorem \ref{thm:add-p+m} reveals that the additivity of $k$-UDA states corresponding to each type of tensor product holds for the typical definition, i.e. Definition \ref{def:udp-uda}. Second, according to the proof, we observe that the composite states $\a\ox\b$, $\a\ox_{K}\b$ and $\a\ox_{K_c}\b$ can be completely determined by only a part of its $k$-partite marginals. Take $\a\ox\b$ in Theorem \ref{thm:add-p+m} (i) as an example. It follows from Eqs. \eqref{eq:add-p+m-1} and \eqref{eq:add-p+m-1a} that the composite state $\a\ox\b$ can be fixed, as long as the $k$-partite marginals of systems $A_{\cS}$ and $B_{\cT}$ can uniquely determine the two parts $\a$ and $\b$ respectively. It implies that the $k$-partite marginals of systems $(A_{\cS}B_{\cT})$ are redundant, for $\cS\subset[m]$, $\cT\subset[n]$ and $\abs{\cS}+\abs{\cT}=k$. We obtain similar conclusions for Theorem \ref{thm:add-p+m} (i) and (ii).

Moreover, in Theorem \ref{thm:add-p+m}, the two initial states $\a$ and $\b$ are both $k$-UDA. Here, by virtue of Lemma \ref{le:ktok+1}, we extend Theorem \ref{thm:add-p+m} to the case when $\a$ and $\b$ are $k_1$-UDA and $k_2$-UDA respectively for different $k_1$ and $k_2$.


\begin{corollary}
\label{cor:mixed-k}
Suppose $\a$ is a $k_1$-UDA state of system $(A_1,\cdots,A_m)$, and $\b$ is a $k_2$-UDA state of system $(B_1,\cdots,B_n)$. Let $k=\max\{k_1,k_2\}$. If one of $\a$ and $\b$ is a pure state, then

(i) $\a\ox\b$ is an $(m+n)$-partite $k$-UDA state of system $(A_1,\cdots,A_m,B_1,\cdots,B_n)$;

(ii) $\a\ox_K\b$ is an $\ell$-partite $k$-UDA state of system $(C_1,\cdots,C_{\ell})$, where $\ell=\max\{m,n\}$ and $C_i:=A_iB_i$ for $i=1,\cdots,\ell$.

(iii) $\a\ox_{K_c}\b$ is an $(m+n-\ell)$-partite $k$-UDA state of system $(C_1,\cdots,C_{\ell},A_{\ell+1},\cdots,A_m,B_{\ell+1},\cdots,n)$, where $\ell\leq\min\{m,n\}$ and $C_i:=(A_iB_i)$ for $i=1,\cdots,\ell$.
\end{corollary}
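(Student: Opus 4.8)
The plan is to reduce to Theorem \ref{thm:add-p+m} by first upgrading the two initial states to a common uniqueness level $k=\max\{k_1,k_2\}$. By Lemma \ref{le:ktok+1}, a $k_1$-UDA state is automatically $k$-UDA whenever $k\geq k_1$, since the set of $k_1$-UDA states is contained in the set of $k'$-UDA states for every $k'\geq k_1$ (applying the lemma repeatedly). Hence $\a$, being $k_1$-UDA, is also $k$-UDA, and likewise $\b$, being $k_2$-UDA, is also $k$-UDA. Moreover, the property ``one of $\a$ and $\b$ is pure'' is unchanged by this reinterpretation, so the hypotheses of Theorem \ref{thm:add-p+m} are met with this common value $k$.

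Now I would simply invoke Theorem \ref{thm:add-p+m} with the pair $(\a,\b)$ viewed as two $k$-UDA states. Part (i) of that theorem yields that $\a\ox\b$ is an $(m+n)$-partite $k$-UDA state on $(A_1,\cdots,A_m,B_1,\cdots,B_n)$; part (ii) yields that $\a\ox_K\b$ is an $\ell$-partite $k$-UDA state on $(C_1,\cdots,C_\ell)$ with $\ell=\max\{m,n\}$ and $C_i=A_iB_i$; and part (iii) yields that $\a\ox_{K_c}\b$ is an $(m+n-\ell)$-partite $k$-UDA state on $(C_1,\cdots,C_\ell,A_{\ell+1},\cdots,A_m,B_{\ell+1},\cdots,B_n)$ with $\ell\leq\min\{m,n\}$. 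These are exactly assertions (i)--(iii) of the corollary, so the proof is complete.

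There is essentially no obstacle here: the only subtle point is to make sure Lemma \ref{le:ktok+1} is applied in the right direction (lower uniqueness index implies higher, never the reverse) and iterated from $k_i$ up to $k=\max\{k_1,k_2\}$ rather than used once. I would also remark, following the discussion after Theorem \ref{thm:add-p+m}, that the resulting composite states can in fact be pinned down by a proper subset of their $k$-partite marginals, namely those supported entirely on the $A$-side or entirely on the $B$-side, but this strengthening is not needed for the statement of the corollary.
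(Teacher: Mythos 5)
Your proposal is correct and follows exactly the route the paper intends: iterate Lemma \ref{le:ktok+1} to upgrade each state to $k$-UDA with $k=\max\{k_1,k_2\}$, then invoke Theorem \ref{thm:add-p+m} directly. The paper leaves this argument implicit (it merely remarks that the corollary follows ``by virtue of Lemma \ref{le:ktok+1}''), so your write-up is simply a spelled-out version of the same proof.
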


Next, we consider the case when the initial two states are both mixed $k$-UDA states. We provide a characterization of the states that are compatible with all $k$-partite marginals of the composite state for different types of tensor product. Since the $K_c$-product is a combination of the tensor product and the Kronecker product by applying the Kronecker product on a part of subsystems, we shall consider the two fundamental products for simplicity.

\begin{lemma}
\label{le:prodadd}
Suppose that $\rho$ and $\sigma$ are both $n$-partite $k$-UDA states supported on the Hilbert spaces $\cH_{A_1}\ox \cdots\ox \cH_{A_n}$ and $\cH_{B_1}\ox \cdots\ox \cH_{B_n}$ respectively. Then,

(i) the states compatible with all $k$-partite marginals of $\rho\ox \sigma$ admit an expansion as $\rho\ox \sigma+\chi$, where the correlation matrix $\chi_{A_{[n]}B_{[n]}}$ has to be trace zero, and satisfies that (a) the reductions $\chi_{A_{[n]}}, ~\chi_{B_{[n]}}$ are both zero; (b) any $k$-partite reduction of $\chi_{A_{[n]}B_{[n]}}$ is zero.

(ii) the states compatible with all $k$-partite marginals of $\rho\ox_K \sigma$ admit an expansion as $\rho\ox_K \sigma+\gamma$, where the $n$-partite correlation matrix $\gamma_{(AB)_{[n]}}$ has to be trace zero, and satisfies that (a) the reductions $\gamma_{A_{[n]}},~\gamma_{B_{[n]}}$ are both zero; (b) any $k$-partite reduction of $\gamma_{(AB)_{[n]}}$ is zero.
\end{lemma}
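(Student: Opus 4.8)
The plan is to reduce both assertions to the defining property of $k$-UDA states, applied once to the $A$-reduction and once to the $B$-reduction of an arbitrary compatible state. Consider assertion (i) first. Let $\tau$ be any state compatible with all $k$-partite marginals of $\rho\ox\sigma$ on the $(2n)$-partite system $(A_1,\cdots,A_n,B_1,\cdots,B_n)$, and simply set $\chi:=\tau-\rho\ox\sigma$. Then $\chi$ is Hermitian, and $\tr\chi=\tr\tau-\tr(\rho\ox\sigma)=1-1=0$, which is the trace-zero claim. Property (b) is immediate as well: for every subset $\cS$ of the $2n$ parties with $\abs{\cS}=k$ we have $\chi_{\cS}=\tau_{\cS}-(\rho\ox\sigma)_{\cS}=0$ by the compatibility hypothesis, so (b) is just a restatement of compatibility in terms of $\chi$. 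Thus the only genuine content is property (a).

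For (a) I would examine the $n$-partite reduction $\tau_{A_{[n]}}:=\tr_{B_{[n]}}(\tau)$. For any $\cS\subset[n]$ with $\abs{\cS}=k$, the marginal $(\tau_{A_{[n]}})_{A_{\cS}}$ equals $\tau_{A_{\cS}}$, and since $A_{\cS}$ is a $k$-partite marginal of the composite, compatibility forces $\tau_{A_{\cS}}=(\rho\ox\sigma)_{A_{\cS}}=\rho_{A_{\cS}}$. Hence the $n$-partite state $\tau_{A_{[n]}}$ has exactly the same $k$-partite marginals as $\rho$; because $\rho$ is $k$-UDA, this yields $\tau_{A_{[n]}}=\rho$, i.e. $\chi_{A_{[n]}}=0$. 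The symmetric argument with $\tau_{B_{[n]}}$ and the $k$-UDA property of $\sigma$ gives $\chi_{B_{[n]}}=0$, completing (a).

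For the Kronecker case (ii) I would repeat the same three steps verbatim with $\rho\ox_K\sigma$ on the $n$-partite system $(C_1,\cdots,C_n)$, $C_j=A_jB_j$, and $\gamma:=\tau-\rho\ox_K\sigma$. The trace-zero claim and (b) go through unchanged. The only extra ingredient needed for (a) is the elementary identity $(\rho\ox_K\sigma)_{C_{\cS}}=\rho_{A_{\cS}}\ox\sigma_{B_{\cS}}$ for every $\cS\subset[n]$, which holds because $\rho\ox_K\sigma$ is $\rho\ox\sigma$ with the Hilbert-space factors merely regrouped as $C_j=A_jB_j$; consequently $\tau_{A_{\cS}}=\tr_{B_{\cS}}(\tau_{C_{\cS}})=\tr_{B_{\cS}}(\rho_{A_{\cS}}\ox\sigma_{B_{\cS}})=\rho_{A_{\cS}}$, and likewise $\tau_{B_{\cS}}=\sigma_{B_{\cS}}$, so $\tau_{A_{[n]}}$ and $\tau_{B_{[n]}}$ share all $k$-partite marginals with the $k$-UDA states $\rho$ and $\sigma$ respectively, forcing $\gamma_{A_{[n]}}=\gamma_{B_{[n]}}=0$. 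I do not expect a real obstacle here: the $K_c$-product is deliberately left out of the statement precisely because its bookkeeping (as in the proof of Theorem \ref{thm:add-p+m}(iii)) is messier, whereas for the tensor and Kronecker products the proof is essentially just unfolding definitions and invoking $k$-UDA once on each side. The single point that takes a little care is to treat $\tau_{A_{[n]}}$ as an $n$-partite state compared with $\rho$ and to verify that each of its $k$-partite marginals is genuinely pinned down by compatibility — this is where the implicit condition $k\le n$ and the $k$-UDA hypotheses are actually used.
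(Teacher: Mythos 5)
Your proof is correct and follows essentially the same route as the paper: the key step in both is to apply the $k$-UDA hypothesis to the $A_{[n]}$- and $B_{[n]}$-reductions of a compatible state after observing that their $k$-partite marginals coincide with those of $\rho$ and $\sigma$. The only cosmetic difference is that you define $\chi$ directly as the difference $\tau-\rho\ox\sigma$ and read off the trace-zero and vanishing-reduction properties, whereas the paper first establishes the full reductions and then invokes a cited ``generic expansion''; your version is, if anything, slightly more self-contained.
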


\begin{proof}
(i) Assume that $\a$ is a $2n$-partite state supported on the corresponding Hilbert space, which shares all the same $k$-partite marginals as $\rho\ox\sigma$. It follows that
\begin{equation}
\label{eq:tprod-1}
\begin{aligned}
\a_{A_{\cS}}&:=\tr_{A_{\cS^c}}\big[\tr_{B_{[n]}}(\a)\big] \\
&~\equiv\tr_{A_{\cS^c}}\big[\tr_{B_{[n]}}(\rho\ox\sigma)\big]=\rho_{A_{\cS}}, \\
\a_{B_{\cS}}&:=\tr_{B_{\cS^c}}\big[\tr_{A_{[n]}}(\a)\big]\\
&~\equiv\tr_{B_{\cS^c}}\big[\tr_{A_{[n]}}(\rho\ox\sigma)\big]=\sigma_{B_{\cS}},
\end{aligned}
\end{equation}
for any subset $\cS\subset[n]$ with $\abs{\cS}=k$. That is, the two sets of marginals $\big\{\a_{A_{\cS}}~|~\forall~\abs{\cS}=k\big\}$ and $\big\{\rho_{A_{\cS}}~|~\forall~\abs{\cS}=k\big\}$ are identical, and the two sets of marginals $\big\{\a_{B_{\cS}}~|~\forall~\abs{\cS}=k\big\}$ and $\big\{\sigma_{B_{\cS}}~|~\forall~\abs{\cS}=k\big\}$ are identical.
Since $\rho$ and $\sigma$ are both $k$-UDA, it implies that the set of marginals $\big\{\a_{A_{\cS}}~|~\forall~\abs{\cS}=k\big\}$ is only compatible with the $n$-partite state $\rho$, and the set of marginals $\big\{\a_{B_{\cS}}~|~\forall~\abs{\cS}=k\big\}$ is only compatible with the $n$-partite state $\sigma$. It means that
\begin{equation}
\label{eq:tprod-2}
\bal
\a_{A_1,\cdots,A_n}&:=\tr_{B_1,\cdots,B_n}(\a)=\rho, \\
\a_{B_1,\cdots,B_n}&:=\tr_{A_1,\cdots,A_n}(\a)=\sigma.
\eal
\end{equation}
Then we may regard $\a$ as a bipartite state of the system $(A_{[n]},(B_{[n]})$ whose two reduced density matrices are exactly $\rho$ and $\sigma$ respectively. Thus, the generic expansion of $\a$ is $\rho\ox\sigma+\chi$, where the correlation matrix $\chi$ has to be trace zero, and satisfies that $\chi_{A_{[n]}}=\chi_{B_{[n]}}=0$ \cite{trimarginal2014}. Moreover, since $\a$ is compatible with all $k$-partite marginals of $\rho\ox\sigma$, it requires that any $k$-partite reduction of $\chi$ is zero.

(ii) Assume that $\a$ is an $n$-partite state supported on the corresponding Hilbert space, which shares all the same $k$-partite marginals as $\rho\ox_K\sigma$. It follows that
\begin{equation}
\label{eq:kprod-1}
\a_{(AB)_{\cS}}:=\tr_{(AB)_{\cS^c}} (\a)\equiv\tr_{(AB)_{\cS^c}} (\rho\ox_K\sigma),
\end{equation}
for any subset $\cS\subset[n]$ with $\abs{\cS}=k$. It follows that
\begin{equation}
\label{eq:kprod-2}
\begin{aligned}
\a_{A_{\cS}}&:=\tr_{B_{\cS}}(\a_{(AB)_{\cS}}) \\
&~\equiv \tr_{B_{\cS}}\left[\tr_{(AB)_{\cS^c}} (\rho\ox_K\sigma)\right]=\rho_{A_{\cS}}, \\
\a_{B_{\cS}}&:=\tr_{A_{\cS}}(\a_{(AB)_{\cS}}) \\
&~\equiv \tr_{A_{\cS}}\left[\tr_{(AB)_{\cS^c}} (\rho\ox_K\sigma)\right]=\sigma_{B_{\cS}}, 
\end{aligned}
\end{equation}
for any subset $\cS\subset[n]$ with $\abs{\cS}=k$. Since $\rho$ and $\sigma$ are both $k$-UDA, it implies that the set of marginals $\big\{\a_{A_{\cS}}~|~\forall~\abs{\cS}=k\big\}$ is only compatible with the $n$-partite state $\rho$, and the set of marginals $\big\{\a_{B_{\cS}}~|~\forall~\abs{\cS}=k\big\}$ is only compatible with the $n$-partite state $\sigma$. It means that
\begin{equation}
\label{eq:kprod-3}
\bal
\a_{A_1,\cdots,A_n}:=\tr_{B_1,\cdots,B_n}(\a)=\rho, \\
\a_{B_1,\cdots,B_n}:=\tr_{A_1,\cdots,A_n}(\a)=\sigma.
\eal
\end{equation}
Then we similarly obtain the generic expansion of $\a$ as $\rho\ox_K\sigma+\gamma$, where the correltaion matrix $\gamma$ has to be trace zero, and satisfies that $\gamma_{A_{[n]}}=\gamma_{B_{[n]}}=0$. Moreover, since $\a$ shares all the same $k$-partite marginals as $\rho\ox_K\sigma$, it implies that any $k$-partite marginal of $\gamma$ is zero, i.e., $\gamma_{(AB)_{\cS}}=0$ for any $\abs{\cS}=k$.

This completes the proof.
\end{proof}

By observation on Lemma \ref{le:prodadd}, the correlation matrices $\chi$ and $\gamma$ are essential to characterize the states which are compatible with the marginals of $\rho\ox\sigma$ and $\rho\ox_K\sigma$ respectively. It is direct to observe from Lemma \ref{le:prodadd} that $\rho\ox\sigma$ and $\rho\ox_K\sigma$ are $k$-UDA if and only if the corresponding correlation matrices $\chi$ and $\gamma$ under the constraints must be zero matrices. 
Hence, to determine the uniquess, it is necessary to further study the existence of the two correlation matrices with required conditions. We specifically analyze the necessary condition that any $k$-partite reduction is zero for both multipartite correlation matrices $\chi$ and $\gamma$ in Lemma \ref{le:prodadd}.
We find that this condition cannot restrict a multipartite Hermitian matrix of zero trace to be zero. For example, the following is a Hermitian matrix of zero trace acting on $\mathbb{C}^2\ox\mathbb{C}^2$, each of whose single-body marginals is zero:
\begin{equation}
\label{eq:twoqubit-cormat}
\bma
m_1 & m_2 & m_3 & m_4 \\
m_2^* & -m_1 & m_5 & -m_3 \\
m_3^* & m_5^* & -m_1 & -m_2 \\
m_4^* & -m_3^* & -m_2^* & m_1 
\ema
\end{equation}
for some real $m_1$ and complex $m_2,m_3,m_4,m_5$. This Hermitian matrix is nonzero if one of $m_1,\cdots,m_5$ are nonzero.
Next, we show the general result on multipartite Hermitian matrices of zero trace. 

\begin{lemma}
\label{le:ptzero}
There exist infinitely many non-zero multipartite Hermitian matrices of zero trace whose $k$-partite reductions are all zero.
\end{lemma}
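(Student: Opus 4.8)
The plan is to reduce the claim to an elementary counting fact about the expansion of a Hermitian matrix in a product operator basis. Since the statement concerns ``multipartite'' matrices, I would first fix any number of parties $n$ and any $k$ with $k<n$ (if $k\ge n$ the only $k$-partite reduction is the matrix itself, so the statement is trivial or vacuous), and arbitrary local dimensions $d_1,\dots,d_n$; taking all $d_i=2$ already suffices. On the $i$-th tensor factor choose a Hermitian orthonormal operator basis $G^{(i)}_0=I_{d_i}/\sqrt{d_i},\,G^{(i)}_1,\dots,G^{(i)}_{d_i^2-1}$ with $\tr G^{(i)}_\mu=0$ for $\mu\ge 1$ and $\tr\!\big(G^{(i)}_\mu G^{(i)}_\nu\big)=\delta_{\mu\nu}$, for instance normalized Pauli matrices in the qubit case or normalized generalized Gell-Mann matrices in general. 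The products $G^{(1)}_{\mu_1}\ox\cdots\ox G^{(n)}_{\mu_n}$ then form a Hermitian orthonormal basis of the full operator space, so every Hermitian $M$ has a unique expansion
\[
M=\sum_{\vec\mu} c_{\vec\mu}\,G^{(1)}_{\mu_1}\ox\cdots\ox G^{(n)}_{\mu_n},\qquad c_{\vec\mu}\in\bbR,
\]
indexed by multi-indices $\vec\mu=(\mu_1,\dots,\mu_n)$; write $\supp(\vec\mu)=\{i:\mu_i\ne 0\}$.

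Next I would record the one-line partial-trace identity: $\tr_i\!\big(G^{(i)}_\mu\big)$ equals $\sqrt{d_i}$ when $\mu=0$ and $0$ otherwise. Applying it factor by factor, for any $\cS\subset[n]$,
\[
\tr_{\cS^c}(M)=\Big(\textstyle\prod_{i\in\cS^c}\sqrt{d_i}\Big)\sum_{\supp(\vec\mu)\subseteq\cS} c_{\vec\mu}\bigotimes_{i\in\cS}G^{(i)}_{\mu_i},
\]
and by orthonormality of the reduced product basis this vanishes exactly when $c_{\vec\mu}=0$ for every $\vec\mu$ with $\supp(\vec\mu)\subseteq\cS$. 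Ranging over all $\cS$ with $\abs{\cS}=k$ (here I use $k<n$ in order to enlarge any support of size $\le k$ to a set of size exactly $k$), I obtain: $M$ is traceless and all its $k$-partite reductions vanish \emph{if and only if} $c_{\vec\mu}=0$ whenever $\abs{\supp(\vec\mu)}\le k$, the condition $c_{\vec 0}=0$ forcing tracelessness being subsumed. In other words, such $M$ are precisely the Hermitian combinations of product basis elements of ``weight'' at least $k+1$.

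Then I would conclude. Because $k<n$ there is at least one multi-index of weight $k+1$, so the real one-parameter family
\[
M_t=t\,\big(G^{(1)}_1\ox\cdots\ox G^{(k+1)}_1\ox I_{d_{k+2}}\ox\cdots\ox I_{d_n}\big),\qquad t\in\bbR\setminus\{0\},
\]
consists of pairwise distinct, nonzero, traceless Hermitian matrices all of whose $k$-partite reductions vanish; more generally every Hermitian combination of weight-$\ge(k+1)$ product basis elements qualifies, and these span a real vector space of positive dimension (equal to $\sum_{j=k+1}^{n}\binom{n}{j}(d^2-1)^j$ when all $d_i=d$). Hence there are infinitely many such matrices. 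As a sanity check, the two-qubit matrix displayed in Eq.~\eqref{eq:twoqubit-cormat} is exactly the generic member of this family for $n=2$, $k=1$: a real combination of the nine weight-two Pauli products, whose nine real parameters match $m_1\in\bbR$ and $m_2,\dots,m_5\in\bbC$.

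I do not expect a genuine obstacle here; the lemma is an existence statement and the construction above is explicit. The only points needing care are setting up the Hermitian operator basis uniformly for possibly unequal local dimensions and keeping the partial-trace bookkeeping straight. If one additionally wants the matrices to live in the bipartite-type systems $A_{[n]}B_{[n]}$ or $(AB)_{[n]}$ of Lemma~\ref{le:prodadd} together with the extra constraints $\chi_{A_{[n]}}=\chi_{B_{[n]}}=0$ (resp.\ $\gamma_{A_{[n]}}=\gamma_{B_{[n]}}=0$), the same argument applies: for $\chi$ one restricts to multi-indices whose support meets both the $A$-block and the $B$-block, which is possible once each block has enough parties, and for $\gamma$ one uses a product basis adapted to the $A_j\,|\,B_j$ splitting of each party, so that those constraints again translate into the vanishing of certain coefficients.
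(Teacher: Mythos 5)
Your proof is correct, but it takes a genuinely different route from the paper's. The paper works directly with matrix elements in the computational basis: it computes the $k$-partite reduction entrywise, extracts the linear constraints $\sum_{i_{\cS^c}=j_{\cS^c}} m_{\substack{i_1,\cdots,i_n\\ j_1,\cdots,j_n}}=0$, and then exhibits the explicit family in Eq.~\eqref{eq:pt=zero-1.1} built from terms $\ketbra{i_1,\cdots,i_n}{j_1,\cdots,j_n}$ with $i_l\neq j_l$ for \emph{every} $l$, so that tracing out any single party already annihilates each term. You instead expand in a Hermitian product operator basis and classify the solutions by the weight (support size) of the multi-index. Your argument buys strictly more: it yields an if-and-only-if characterization (the admissible matrices are exactly the Hermitian combinations of weight-$\geq k+1$ basis elements), an explicit dimension count for the solution space, and it identifies the paper's two-qubit matrix in Eq.~\eqref{eq:twoqubit-cormat} as the generic weight-two element; the paper's family in Eq.~\eqref{eq:pt=zero-1.1} is recovered as the weight-exactly-$n$ sublass, and your examples of weight $k+1<n$ are not covered by it. The paper's construction, in turn, is more elementary in that it avoids setting up the orthonormal operator basis and verifies the vanishing of reductions by inspection. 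Both establish the lemma; your closing remark that the same bookkeeping handles the extra constraints $\chi_{A_{[n]}}=\chi_{B_{[n]}}=0$ of Lemma~\ref{le:prodadd} is a useful addition that the paper does not make explicit.
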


\begin{proof}
Suppose that $\chi$ is an $n$-partite Hermitian matrix of zero trace acting on the Hilbert space $\cH_{A_1}\ox\cdots\ox\cH_{A_n}$, and each $k$-partite reduction of $\chi$ is zero. We formulate the matrix form of $\chi$ as
\begin{equation}
\label{eq:pt=zero}
\chi=\sum_{\substack{i_1,\cdots,i_n\\ j_1,\cdots, j_n}} m_{\substack{i_1,\cdots,i_n\\ j_1,\cdots, j_n}}\ketbra{i_1,\cdots,i_n}{j_1,\cdots, j_n},
\end{equation}
where 
\begin{equation*}
\bal
&\{\ket{i_l}:\ket{0},\cdots,\ket{d_l-1}\} \\
&\{\ket{j_l}:\ket{0},\cdots,\ket{d_l-1}\}
\eal
\end{equation*}
are both the computational basis of the $l$-th subspace $\cH_{A_l}$ for any $l=1,\cdots,n$. Also the matrix elements satisfy that $m_{\substack{i_1,\cdots,i_n\\ j_1,\cdots, j_n}}=m^*_{\substack{j_1,\cdots,j_n\\ i_1,\cdots, i_n}}$ because $\chi$ is Hermitian.
For any subset $\cS\subset[n]$ with $\abs{\cS}=k$, the $k$-partite reduction $\chi_{A_{\cS}}$ can be calculated as 
\begin{equation}
\label{eq:pt=zero-1}
\begin{aligned}
\chi_{A_{\cS}}&:=\tr_{A_{\cS^c}}(\chi) \\
&~=\sum_{\substack{i_1,\cdots,i_n\\ j_1,\cdots, j_n}} \braket{j_{\cS^c}}{i_{\cS^c}} m_{\substack{i_1,\cdots,i_n\\ j_1,\cdots, j_n}} \a_{ij}^{s_1,\cdots,s_k} \\
&~=\sum_{\substack{i_{s_1},\cdots,i_{s_k}\\ j_{s_1},\cdots,j_{s_k}}} \big(\sum_{i_{\cS^c}=j_{\cS^c}} m_{\substack{i_1,\cdots,i_n\\ j_1,\cdots, j_n}}\big) \a_{ij}^{s_1,\cdots,s_k},
\end{aligned}
\end{equation}
where $\a_{ij}^{s_1,\cdots,s_k}=\ketbra{i_{s_1},\cdots,i_{s_k}}{j_{s_1},\cdots,j_{s_k}}$,
$s_1,\cdots,s_k\in\cS$, $i_{\cS^c}$ denotes a tuple $(i_{t_1},\cdots,i_{t_{n-k}})$ for $t_1,\cdots,t_{n-k}\in\cS^c$, and similarly for $j_{\cS^c}$.
Due to the assumption that $\chi_{A_{\cS}}=0$ for any subset $\cS$, from Eq. \eqref{eq:pt=zero-1} we obtain that for any subset $\cS$, each sum as $\sum\limits_{i_{\cS^c}=j_{\cS^c}} m_{\substack{i_1,\cdots,i_n\\ j_1,\cdots, j_n}}$ is zero. This requirement cannot restrict Hermitian $\chi$ to be a zero matrix. For example, the following is such a non-zero Hermitian matrix:
\begin{equation}
\label{eq:pt=zero-1.1}
\begin{aligned}
\chi=&\sum_{\forall l,i_l\neq j_l} \bigg(c_{\substack{i_1,\cdots,i_n\\ j_1,\cdots,j_n}}\ketbra{i_1,\cdots,i_n}{j_1,\cdots,j_n} \\
&\qquad\quad+c^*_{\substack{i_1,\cdots,i_n\\ j_1,\cdots,j_n}}\ketbra{j_1,\cdots,j_n}{i_1,\cdots,i_n}\bigg).
\end{aligned}
\end{equation}
One can verify that each $(n-1)$-partite reduction of the Hermitian $\chi$ in Eq. \eqref{eq:pt=zero-1.1} is zero, and thus each $k$-partite reduction is zero for any $k<n$. 
\end{proof}

Lemma \ref{le:ptzero} indicates that there could be nonzero correlation matrices $\chi$ and $\gamma$ satisfying the conditions given in Lemma \ref{le:prodadd} such that $\rho\ox\sigma+\chi$ and $\rho\ox_K\sigma+\gamma$ are positive semidefinite. Therefore, we conjecture there is generally no additivity when the two initial states are both mixed $k$-UDA states, for different composite ways of tensor product.

\section{Construction of UDA states with genuine multipartite entanglement}
\label{sec:geuda}

The GME states are resourceful, which have been widely used in various quantum information processing tasks. Thus, it is valuable to perform QST of GME states from an experimental perspective. In this view, it is necessary to consider the uniqueness issue on the GME states. It is also connected to detecting multipartite entanglement. Note that graph states are entangled pure states that exhibit complex structures of genuine multipartite entanglement. In Ref. \cite{end2uda2010} the authors consider detecting graph-state entanglement by measuring two-particle correlations only, and concluded that this is impossible. For this reason, in this section we specifically study the $2$-UDA states with genuine multipartite entanglement in the systems of large number of parties and high local dimensions. In other words, for such states, genuine multipartite entanglement can be detected by measuring two-particle correlations only. 

Inspired by the additivity of $k$-UDA states derived in Theorem \ref{thm:add-p+m} and the construction of GME states in Ref. \cite{geshen2020,Chen_2023}, we find that it is effective to construct genuinely entangled $k$-UDA states via the Kronecker product and the $K_c$-product given in Definition \ref{def:prods} (ii) and (iii) respectively. One can verify that the Kronecker product of two GME states is also genuinely entangled. For the $K_c$-product, we have shown that if the range of one of the two initial states is not spanned by biproduct vectors, then the output state via $K_c$-product must be a GME state supported on the corresponding Hilbert space \cite{geshen2020}. 
As a special but important case, the range of pure GME states cannot be spanned by biproduct vectors. Hence, if one of the two input states is a pure GME state, then the output state via the $K_c$-product must be genuinely entangled by the statement above-mentioned. Then combining such a construction with the additivity of $k$-UDA states, we derive an effective way to construct genuinely entangled $k$-UDA states as follows.

\begin{proposition}
\label{prop:geuda}
Suppose that $\a$ and $\b$ are two $k$-UDA states of systems $(A_1,\cdots,A_m)$ and $(B_1,\cdots,B_n)$ respectively, and they are both genuinely entangled. If one of $\a$ and $\b$ is pure, then 

(i) $\a\ox_K\b$ is an $\ell$-partite genuinely entangled $k$-UDA state, where $\ell=\max\{m,n\}$.

(ii) $\a\ox_{K_c}\b$ is an $(m+n-\ell)$-partite genuinely entangled $k$-UDA state, for any $\ell<\min\{m,n\}$.
\end{proposition}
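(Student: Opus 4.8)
The plan is to separate the two properties bundled into each claim — that the output state is $k$-UDA, and that it is genuinely multipartite entangled — verify them independently, and then combine. The $k$-UDA half is already in hand: since one of $\a$ and $\b$ is pure and both are $k$-UDA, Theorem~\ref{thm:add-p+m}(ii) gives that $\a\ox_K\b$ is an $\ell$-partite $k$-UDA state on $(C_1,\dots,C_\ell)$, and Theorem~\ref{thm:add-p+m}(iii) gives that $\a\ox_{K_c}\b$ is an $(m+n-\ell)$-partite $k$-UDA state for every admissible $\ell$. So the only thing left to establish is genuine multipartite entanglement of the two composite states.

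For assertion (i), I would invoke the fact established in Ref.~\cite{geshen2020} that the Kronecker product of two GME states is again GME: since $\a$ and $\b$ are both genuinely entangled by hypothesis, $\a\ox_K\b$ is genuinely entangled on the $\ell$-partite system $(C_1,\dots,C_\ell)$. Combined with the $k$-UDA conclusion above, this settles (i). The one point to check is that the partition into the blocks $C_i=(A_iB_i)$ used here is exactly the partition for which the cited GME statement is phrased, which it is by the very definition of $\ox_K$ in Definition~\ref{def:prods}(ii).

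For assertion (ii), I would use the sharper criterion of Ref.~\cite{geshen2020}: if the range of one of the two input states is not spanned by biproduct vectors, then the $K_c$-product is a GME state on the corresponding Hilbert space. Take, without loss of generality, $\a=\proj{\psi}$ to be the pure input; since $\a$ is genuinely entangled, $\ket{\psi}$ is not a biproduct vector, and the range of $\a$ is the one-dimensional space $\mathrm{span}\{\ket{\psi}\}$, which is therefore not spanned by biproduct vectors. Hence $\a\ox_{K_c}\b$ is GME for every $\ell<\min\{m,n\}$, and together with Theorem~\ref{thm:add-p+m}(iii) this yields (ii).

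In effect the proposition is a splicing of Theorem~\ref{thm:add-p+m} with the entanglement-structure results of Ref.~\cite{geshen2020}, so I do not expect a serious analytic obstacle; the only delicate point is bookkeeping — ensuring that in each case the composite system on which the cited construction produces a GME state is the same system on which Theorem~\ref{thm:add-p+m} produces the $k$-UDA state, so that both conclusions genuinely apply to one and the same state. In particular, when $\ell=\min\{m,n\}$ the $K_c$-product degenerates to the Kronecker product, which is exactly why (ii) is stated under the strict inequality $\ell<\min\{m,n\}$ while the boundary case is absorbed into (i).
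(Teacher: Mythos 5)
Your proposal is correct and follows essentially the same route as the paper: the paper proves this proposition by combining Theorem~\ref{thm:add-p+m}(ii)--(iii) for the $k$-UDA part with the two GME-preservation facts from Ref.~\cite{geshen2020} (the Kronecker product of two GME states is GME, and the $K_c$-product is GME when one input's range is not spanned by biproduct vectors, the pure GME input supplying exactly that condition). Your additional bookkeeping remarks, including the observation about the boundary case $\ell=\min\{m,n\}$, are consistent with the paper's treatment.
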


From Proposition \ref{prop:geuda} we can generate genuinely entangled $k$-UDA states in the systems of more parties and higher local dimensions using two genuinely entangled $k$-UDA states. By repeating the composite process, the number of parties and local dimensions can be continuously increased. We illustrate the construction process given by Proposition \ref{prop:geuda} in Fig. \ref{fig:geuda}.

\begin{figure}[htbp]
\centering
\includegraphics[width=0.48\textwidth]{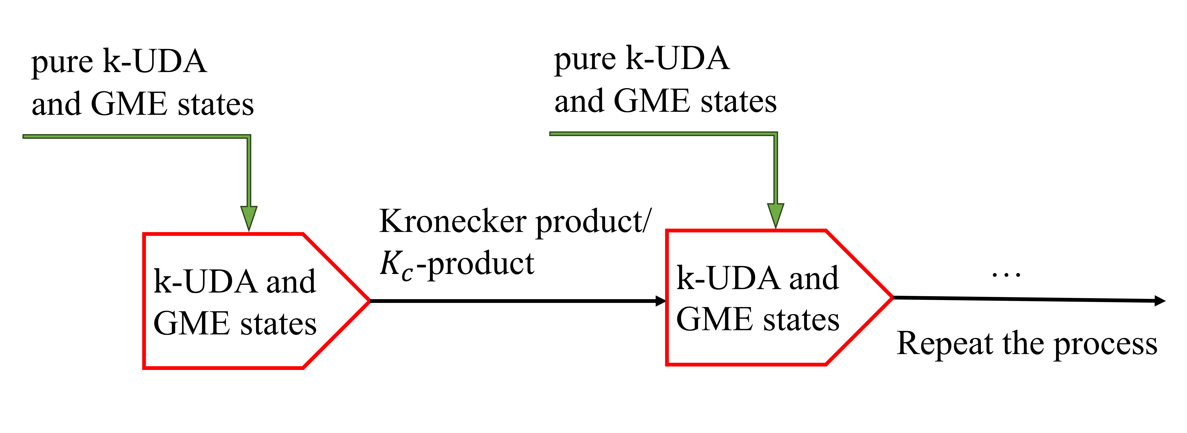}
\caption{The process of constructing genuinely entangled $k$-UDA states of large number of parties. From Proposition \ref{prop:geuda}, the output state of two genuinely entangled $k$-UDA states (one of which should be pure) via the Kronecker product or the $K_c$-product is still $k$-UDA and genuinely entangled. By repeating the composition of a pure genuinely entangled $k$-UDA state and the output state, the number of parties and local dimensions can be continuously increased.}
\label{fig:geuda}
\end{figure}

Next, we propose a class of GME states which are $2$-UDA by virtue of Proposition \ref{prop:geuda}. According to the construction process illustrated by Fig. \ref{fig:geuda}, it is fundamental to discover two genuinely entangled $2$-UDA states as the initial states. First, it is known that only three-qubit generalized GHZ states and their local unitary equivalents cannot be uniquely determined by the two-qubit marginal reductions \cite{GGHZ2009}. Further, it was shown in Ref. \cite{3qubitc2001} that there are exactly two locally inequivalently classes of genuinely entangled pure three-qubit states, namely the class of GHZ type states and the class of $W$ type states. Hence, the three-qubit $W$ type states are both genuinely entangled and $2$-UDA. Second, there is a mixed tripartite state constructed in Ref. \cite{trimarginal2014} which is both genuinely entangled and $2$-UDA. Based on the facts above, we propose the following class of states.


\begin{example}
\label{ex:geuda-1}
The standard, unique form of three-qubit $W$ type was derived in Ref. \cite{ent3qubi2000} as 
\beq
\label{eq:wtype}
\ket{\psi_{W}}=\sqrt{a}\ket{001}+\sqrt{b}\ket{010}+\sqrt{c}\ket{100}+\sqrt{d}\ket{000},
\eeq
where $a,b,c>0$, and $d\equiv1-(a+b+c)\geq 0$. It is also known from Ref. \cite{ent3qubi2000} that $\ket{\psi_{W}}$ is genuinely entangled and cannot be converted to the generalized GHZ states by invertible local operators, and thus $2$-UDA.

Moreover, the following mixed states supported on ${\cal H}_{B_1B_2B_3}\cong\mathbb C^{d+1}\ox\mathbb C^{d+1}\ox\mathbb C^{d+1}$ are genuinely entangled and can be uniquely determined by their bipartite marginals \cite{trimarginal2014}:
\begin{equation}
\label{eq:geuda-ex1}
\b_{B_1B_2B_3}=p_1\sigma_{B_1B_2B_3}+\sum_{m=2}^d p_m\proj{mmm},
\end{equation}
where $p_1>0$, $p_m\geq 0$, and
\begin{equation}
\label{eq:geuda-ex2}
\sigma_{B_1B_2B_3}=\frac{2}{3}\proj{\xi}+\frac{1}{3}\proj{111}
\end{equation}
via $\ket{\xi}=\frac{1}{2}\ket{010}+\frac{1}{2}\ket{110}+\frac{1}{\sqrt{2}}\ket{001}$.

Let $\a^1_{A_1A_2A_3}$ and $\a^2_{A_1A_2A_3}$ be two three-qubit $W$ type states, and $\b_{B_1B_2B_3}$ be a state given by Eq. \eqref{eq:geuda-ex1}. It follows from Proposition \ref{prop:geuda} that $\a^1_{A_1A_2A_3}\ox_K\a^2_{A_1A_2A_3}$, $\a^1_{A_1A_2A_3}\ox_{K_c}\a^2_{A_1A_2A_3}$ are pure genuinely entangled $2$-UDA states, and $\a^1_{A_1A_2A_3}\ox_K\b_{B_1B_2B_3}$, $\a^1_{A_1A_2A_3}\ox_{K_c}\b_{B_1B_2B_3}$ are mixed genuinely entangled $2$-UDA states. By repeating the composite process, we generate a class of genuinely entangled $2$-UDA states from a three-qubit $W$ type state and a tripartite state given by Eq. \eqref{eq:geuda-ex1}.

\end{example}

Due to the uniqueness, the genuine multipartite entanglement in the states given by Example \ref{ex:geuda-1} can be detected by measuring two-particle correlations only, which is experimentally realizable.

\section{Concluding remarks}
\label{sec:con}

The pure states that can be completely determined by their marginals are essential to the efficient QST. In this paper, we generalized the definition of $k$-UDA states to the context of arbitrary (no matter pure or mixed) states, motivated by the efficient QST of low-rank states. Similar to pure $k$-UDA states, we have shown that for mixed $k$-UDA states, the UDA property is also maintained under LU equivalence, and ``$k$-UDA'' also implies ``$(k+1)$-UDA''. Due to the demand for $k$-UDA states of large number of parties with a small number $k$, we considered the \emph{additivity} of $k$-UDA states via three different composite ways of tensor product. Two $k$-UDA states admit the additivity under each type of products, if the composite state for the corresponding product is still $k$-UDA. We have shown that under each type of products, the additivity holds when one of the two initial $k$-UDA states is pure. Specially, it implies that the additivity holds within the typical definition on pure states. We also proposed specific conditions to verify the additivity of two mixed $k$-UDA states. However, we conjectured there is generally no additivity for two mixed $k$-UDA states. Since one of the three composite ways, namely the $K_c$-product, is adopted to construct GME states, we derived an operational and effective method to construct $k$-UDA states with genuine entanglement in the systems of large number of parties, by uniting the construction of GME states and the additivity of $k$-UDA states. Using this method, we constructed a class of GME states which are uniquely determined by two-particle correlations only. The future work is to reveal more interesting properties of $k$-UDA states, and study the detection of genuine multipartite entanglement based on mixed UDA states.

\section*{acknowledgements}

Y. S. is supported by the Fundamental Research Funds for the Centeral Universities under Grant No. JUSRP123029. L. C. was supported by the NNSF of China (Grant No. 11871089), and the Fundamental Research Funds for the Central Universities (Grant No. ZG216S2005).


\bibliography{marginal}

\end{document}